\DeclareMathOperator{\GF}{GF}
\newcommand{\F}{\ensuremath{\mathbb F}}
\newcommand{\FF}{\ensuremath{\mathbb F}}
\newcommand{\Fp}{\ensuremath{\FF_p}}
\newcommand{\ZZ}{\ensuremath{\mathbb Z}}
\newcommand{\QQ}{\ensuremath{\mathbb{Q}}}
\DeclareMathOperator{\Reslt}{Res}
\DeclareMathOperator{\vlog}{vlog}
\DeclareMathOperator{\Prb}{Pr}
\DeclareMathOperator{\Norm}{Norm}
\newcommand{\norm}[1]{{\|#1\|}}
\DeclareMathOperator{\cff}{coeff}
\DeclareMathOperator{\LLL}{LLL}
\newcommand{\mc}[3]{\multicolumn{#1}{#2}{#3}}
\newcommand{\coeffs}[1]{#1}
\newcommand{\JLSVi}{{JLSV\textsubscript{1}}}
\newcommand{\JLSVii}{{JLSV\textsubscript{2}}}
\newtheorem{theorem}{Theorem}[section]
\newtheorem{lemma}[theorem]{Lemma}
\newtheorem{proposition}[theorem]{Proposition}
\newtheorem{corollary}[theorem]{Corollary}
\theoremstyle{definition}
\newtheorem{definition}[theorem]{Definition}
\newtheorem{example}[theorem]{Example}
\theoremstyle{remark}
\newtheorem{remark}[theorem]{Remark}
\numberwithin{equation}{section}
\title[Faster individual discrete logarithms in finite fields]
      {Faster individual discrete logarithms in finite fields of composite
        extension degree}
\author{Aurore Guillevic}
\address{%
  Inria Nancy--Grand Est,
  \'{E}quipe Caramba,
  615 rue du jardin botanique,
  CS 20101,
  54603 Villers-lès-Nancy Cedex,
  France}
\email{aurore.guillevic@inria.fr}
\urladdr{\url{https://members.loria.fr/AGuillevic}}
\thanks{}
\subjclass[2010]{Primary 11T71: Cryptography}
\date{September 16, 2018.
This document is the author's version.
  First published in Mathematics of Computation, 2018, 
  published by the American Mathematical Society.
  Publisher's version available online September 6, 2018,
  at \url{https://doi.org/10.1090/mcom/3376}.
  See \url{http://www.ams.org/publications/authors/ctp}
  about AMS copyright.
}
\dedicatory{}
\keywords{Finite field, discrete logarithm, number field sieve,
  function field sieve, individual logarithm.}
\begin{document}
\begin{abstract}
Computing discrete logarithms in finite fields is a main concern in
cryptography. The best algorithms in large and medium
characteristic fields (e.g., $\GF(p^2)$, $\GF(p^{12})$) are the Number Field Sieve
and its variants (special, high-degree, tower). The best algorithms in small
characteristic finite fields (e.g., $\GF(3^{6 \cdot 509})$) are the Function
Field Sieve, Joux's algorithm, and the quasipolynomial-time algorithm.
The last step of this family of algorithms is the individual logarithm
computation. It computes a smooth decomposition of a given target in
two phases: an initial splitting, then a descent tree. 
While new improvements have been made to reduce the complexity of the
dominating relation collection and linear algebra steps,
resulting in a smaller factor basis (database of known logarithms
of small elements), the last step remains at the same level of difficulty. Indeed,
we have to find a smooth decomposition of a typically large element in
the finite field.
This work improves the initial splitting phase and applies to any
nonprime finite field. It is very efficient when the extension degree
is composite. It exploits the proper subfields, resulting in a much
more smooth decomposition of the target. This leads to a new trade-off
between the initial splitting step and the descent step in small characteristic.
Moreover it reduces the width and the height of the subsequent descent
tree.
\end{abstract}

\maketitle

\section{Introduction}

This work is interested in improving the last step of discrete
logarithm (DL) computations in nonprime finite fields.
The discrete logarithm instances that we target come from
Diffie-Hellman (DH) \cite{DifHel76} key-exchange,
or from pairing-based cryptography. In the latter case, the security relies on
the hardness of computing discrete logarithms in two groups:
the group of points of a particular elliptic curve defined over a
finite field, and
a small extension of this finite field
(in most of the cases of degree 2, 3, 4, 6, or 12).

The finite fields fall in three groups: small, medium and
large characteristic finite fields, corresponding to the respective size of the
characteristic $p$ compared to the total size $Q=p^n$ of the finite field.
This is formalized with the $L$ notation:
\begin{equation}
  \label{eq:L-notation}
  L_{Q}[\alpha,c] = e^{(c+o(1))(\log Q)^{\alpha}(\log \log
    Q)^{1-\alpha}}~,\mbox{ where } Q = p^n,~\alpha \in [0,1],~ c
  \neq 0.
\end{equation}
Small, medium and large characteristic correspond to $\alpha < 1/3$, $1/3 <
\alpha < 2/3$, and $\alpha >2/3$ respectively. The boundary cases are
$\alpha=1/3$ and $\alpha=2/3$.
In large characteristic, that is $p = L_Q[\alpha, c]$ where $\alpha
> 2/3$, the Number Field Sieve (NFS) \cite{Gordon93,Schirokauer93,JouLer02} provides the
best expected running time: in $L_Q[1/3, (64/9)^{1/3}\approx 1.923]$
and was used in the latest record computations in a 768-bit prime
field~\cite{EC:KDLPS17}.
Its special variant in expected running time
$L_Q[1/3, (32/9)^{1/3}\approx 1.526]$
was used to break a 1024-bit trapdoored prime field~\cite{EC:FGHT17}. 
In 2015 and 2016, the Tower-NFS construction of Schirokauer was revisited for
prime fields \cite{AC:BarGauKle15}, then
 Kim, Barbulescu and Jeong improved it 
for nonprime finite fields $\F_{p^n}$ where the extension degree $n$
is composite \cite{C:KimBar16,PKC:KimJeo17}, and used the name
Extended TNFS algorithm. To avoid a confusion due to the profusion of names
denoting variants of the same algorithm, in this paper we will use TNFS
as a generic term to denote the family
of all the variants of NFS that use a tower of number fields. 
Small characteristic means $p=L_Q[\alpha,c]$ where $\alpha < 1/3$. The first
$L[1/3]$ algorithm was proposed by Coppersmith, and generalized as the Function
Field Sieve \cite{Adleman94,AdlHua99}.
 
The NFS and FFS algorithms are made of four phases: polynomial selection
(two polynomials are chosen), relation collection where relations
between small elements are obtained, linear algebra
(computing the kernel of a huge sparse matrix over an auxiliary large
prime finite field), and individual discrete logarithm computation. 
In this work, we improve this last step.
All the improvements of NFS, FFS, and related variants since the 90's decrease the
size of the factor basis, that is, the database of known discrete
logarithms of \emph{small} elements obtained after the linear algebra
step, \emph{small} meaning an element represented by a polynomial of
small degree (FFS), resp., an element whose pseudonorm is small (NFS).
The effort required in the individual discrete logarithm step increases:
one needs to find a decomposition of a given target into \emph{small} elements, to be able
to express its discrete logarithm in terms of already known logarithms
of elements in the factor basis, while the factor basis has decreased
at each major improvement.
In characteristic 2 and 3 where the extension degree is composite, obtaining the
discrete logarithms of the factor basis elements can be done in polynomial time.
The individual discrete logarithm is the most costly part, in quasi-polynomial
-time in the most favorable cases \cite{EC:BGJT14,C:GraKleZum14}.
In practice, the record computations
\cite{NMBRTHY:GGMZ13,NMBRTHY:Joux13,AMOR14,AMOR15,NMBRTHY:JouPie14,NMBRTHY:ACCMORR16} 
 implement hybrid algorithms made of Joux's  $L[1/4]$
algorithm \cite{SAC:Joux13}, and the individual discrete logarithm is computed with
a continued fraction descent, then a classical descent, a QPA descent, and a
Gr{\"o}bner basis descent, or a \emph{powers-of-two} descent algorithm
(a.k.a.~ zig-zag descent) \cite{EPRINT:GraKleZum14b,NMBRTHY:GraKleZum14a,NMBRTHY:GraKleZum14b}.

The heart of this paper relies on the following two observations.
Firstly, to speed up the individual discrete logarithm phase, we start by
  speeding up the initial splitting step, and for that we compute a
  representation of a preimage of the given target of smaller degree, and/or whose
  coefficients are smaller. It will improve its smoothness probability.
Secondly, to compute this preimage of smaller degree, we
exploit the proper subfields of the finite field $\GF(p^n)$, and
intensively use this key-ingredient: since we
  are computing discrete logarithms modulo (a prime divisor of)
  $\Phi_n(p)$, we can freely multiply or divide the target by any
  element in a proper subfield without affecting its discrete
  logarithm modulo $\Phi_n(p)$. 

\subsection*{Organization of the paper.}
The background needed is presented as preliminaries in Section~\ref{prelim}.
We present our generic strategy to lower the degree of the
polynomial representing a given element in $\GF(p^n)$
in Section~\ref{sec:general-strategy}.
We apply it to characteristic two and three in
Section~\ref{sec:app-small-char}.
Preliminaries before the large characteristic case are given in
Section~\ref{sec:preliminaries}. 
We apply our technique
to medium and large characteristic finite fields, that is the NFS case
and its tower variant in Section~\ref{sec:app-large-char}, and provide
examples of cryptographic size in Section~\ref{sec:large-char-examples}. 
Finally in Section~\ref{sec:optimal-repr} we present a more advanced
strategy, to exploit several subfields at a time, and we apply it
to $\F_{p^6}$.

\section{Preliminaries}
\label{prelim}

\subsection{Setting}
In this paper, we are interested in nonprime finite fields
$\GF(p^n)$, $n>1$. To keep the same notation between small, medium, and large
characteristic finite fields, we assume that the
field $\F_{p^n}$ is defined by an extension of degree $n_2$ above an extension
of degree $n_1$, that is, $\F_{(p^{n_1})^{n_2}}$, and $n=n_1n_2$. 
The elements are of the form $T =
\sum_{i=0}^{n_2-1}\sum_{j=0}^{n_1-1}a_{ij}y^jx^i$, where the
coefficients $a_{ij}$ are in $\F_p$, the coefficients $a_i
=\sum_{j=0}^{n_1-1}a_{ij}y^j$ are in
$\F_{p^{n_1}} = \F_p[y]/(h(y))$,
and $\F_{(p^{n_1})^{n_2}} = \F_{p^{n_1}}[x]/(\psi(x))$, where $h$
is a monic irreducible polynomial of $\F_p[y]$ of degree $n_1$ and $\psi$ is a
monic irreducible polynomial of $\F_{p^{n_1}}[x]$ of degree $n_2$. 
In other words, $T$ is represented as a
polynomial of degree $n_2-1$ in the variable $x$, and has coefficients
$a_i \in \F_{p^{n_1}}$.
For the FFS and NFS algorithms,
$n_1=1$ and $n_2=n$; for finite fields from pairing constructions, $n_2
> 1$ is a strict divisor of $n$, and for the original version of TNFS, $n_1=n$ and $n_2=1$.

\begin{definition}[Smoothness]
  Let $B$ be a positive integer.
A polynomial is said to be $B$-smooth w.r.t.~its degree if all its irreducible
factors have a degree smaller than $B$.
An integer is said to be 
$B$-smooth if all its prime divisors are less than $B$. An ideal in a
number field is said to be $B$-smooth if it factors into prime ideals whose
norms are bounded by $B$.
\end{definition}

\begin{definition}[Preimage]
  The preimage of an element
  $a=\sum_{i=0}^{n_2-1}\sum_{j=0}^{n_1-1}a_{ij}y^jx^i \linebreak \in
  \F_{(p^{n_1})^{n_2}}$ will be, for the NFS and TNFS algorithms, the bivariate polynomial
  $\sum_{i=0}^{n_2-1}\sum_{j=0}^{n_1-1}a'_{ij}y^jx^i \in \ZZ[x,y]$,
  where each coefficient $a'_{ij}$ is a lift in $\ZZ$ of the
  coefficient $a_{ij}$ in $\F_p$.
  It is a preimage for the reduction modulo $(p, h,
\psi)$, that we denote by $\rho: \ZZ[x,y] \to \F_{(p^{n_1})^{n_2}}$.
  In small characteristic, the preimage of $a$ is a univariate polynomial in
  $\F_{p^{n_1}}[x]$.   It is a preimage for the reduction modulo
  $\psi$, that we also denote by $\rho: \F_{p^{n_1}}[x] \to \F_{(p^{n_1})^{n_2}}$. 
\end{definition}

\begin{definition}[Pseudonorm]
  The integral pseudonorm w.r.t.~a number field $\QQ[x]/(f(x))$
  ($f$ monic) of a polynomial $T=\sum_{i=0}^{\deg
    f-1}a_ix^i$ of integer coefficients $a_i$ is computed as
  $\Reslt_x(T(x),f(x))$.
\end{definition}

Since there is no chance for a preimage of a target $T_0$ to be $B$-smooth,
the individual discrete logarithm is done in two steps: an
\emph{initial splitting} of the target,\footnote{also called
\emph{boot} or \emph{smoothing step} in large characteristic finite fields} and then 
a \emph{descent} phase.\footnote{in order to make
  no confusion with the mathematical \emph{descent}, which is not involved
in this process, we mention that in this
step, the norm (with NFS) or the degree (with FFS) of the preimage \emph{decreases}.}
The initial splitting is an iterative process that tries many
targets $g^t T_0 \in \F_{p^n}^*$, where $t$ is a known exponent
(taken uniformly at random), until a $B_1$-smooth
decomposition of the preimage is found.
Here \emph{smooth} stands for a
factorization into irreducible polynomials of $\F_{p^{n_1}}[x]$ of
degree at most $B_1$ in the small characteristic setting, resp., a pseudonorm
that factors as an integer into a product of primes smaller than
$B_1$ in the NFS (and TNFS) settings.

The second phase starts a recursive process for each element
less than $B_1$ but greater than $B_0$ obtained after the initial
splitting phase. 
Each of these medium-sized elements are processed until a complete
decomposition over the factor basis is obtained. Each element obtained
from the initial splitting is at the root of its descent tree. 
One finds a relation involving the original one and
other ones whose degree, resp., pseudonorm, is strictly smaller than
the degree, resp., pseudonorm, of the initial element at the root.
These smaller elements form the new leaves of the descent
tree. For each leaf, the process is repeated until all the leaves
are elements in the factor basis. The discrete logarithm of an element
output by the initial splitting can be computed by a tree traversal. 
This strategy is considered in \cite[\S 6]{CopOdlSch86}, 
\cite[\S 7]{LaMOdl91},
\cite[\S 3.5]{JouLer03}, 
\cite[\S 4]{PKC:ComSem06}.

In small characteristic, the initial splitting step is known as
the Waterloo\footnote{the name comes from the
  authors' affiliation: the University of Waterloo, ON, Canada.}
algorithm~\cite{BFMV84,C:BlaMulVan84}.
It outputs $T = U(x)/V(x) \mod
I(x)$, and $U,V$ are two polynomials of degree $\lfloor (n_2-1)/2 \rfloor$.
It uses an Extended GCD computation.
For prime fields, the continued fraction algorithm was already used
with the Quadratic Sieve and Coppersmith-Odlyzko-Schroeppel algorithm.
It expresses an integer $N$
modulo $p$ as a fraction $N \equiv u/v \bmod p$, and the numerator and
denominator are of size about the square root of $p$.
The generalization of this technique was used in \cite{C:JLSV06}.
As for the Waterloo algorithm, this 
technique provides a very good practical speed up but does not improve
the asymptotic complexity.

this subfield tool was highlighted in \cite{AC:Guillevic15}; we will
intensively use it.
\begin{lemma}[{\cite[Lemma 1]{AC:Guillevic15}}] 
\label{lemma:log equality up to subgroup elt}
Let $T \in \FF_{p^n}^{*}$,
and let $\deg T <
n$. Let $\ell $ be a nontrivial prime divisor of $\Phi_n(p)$. 
Let $T' = u \cdot T$ with $u$ in a proper subfield of $\FF_{p^n}$. Then
\begin{equation}
\log T' \equiv \log T \bmod \Phi_n(p) \mbox { and in particular } \log T' \equiv \log T \bmod \ell ~.
\end{equation}
\end{lemma}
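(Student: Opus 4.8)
The plan is to reduce the statement to the multiplicativity of the discrete logarithm together with the observation that any element of a proper subfield has order dividing $p^d-1$ for some proper divisor $d \mid n$, and that $\gcd(p^d - 1, \Phi_n(p))$ is trivial (contains no nontrivial prime divisor of $\Phi_n(p)$). First I would fix a generator $g$ of $\FF_{p^n}^*$, so that $\log$ is the discrete logarithm to base $g$, well defined modulo $p^n - 1$, hence in particular modulo $\Phi_n(p)$ and modulo $\ell$. Since $T' = u \cdot T$ and $\log$ is a homomorphism from $\FF_{p^n}^*$ to $\ZZ/(p^n-1)\ZZ$, we have $\log T' \equiv \log T + \log u \pmod{p^n - 1}$, and it therefore suffices to show $\log u \equiv 0 \pmod{\Phi_n(p)}$.

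Next I would use that $u$ lies in a proper subfield: there is a proper divisor $d$ of $n$ with $u \in \FF_{p^d}^* \subseteq \FF_{p^n}^*$, so the multiplicative order of $u$ divides $p^d - 1$. Writing $e = \log u$, the relation $u = g^e$ forces $(p^n-1) / \gcd(e, p^n-1)$, the order of $u$, to divide $p^d - 1$; equivalently $(p^d-1)\, e \equiv 0 \pmod{p^n-1}$, so $\Phi_n(p) \mid (p^n-1) \mid (p^d-1)e$. The key number-theoretic input is then that $\gcd\!\bigl(\Phi_n(p),\, p^d - 1\bigr) = 1$ for every proper divisor $d$ of $n$, or at least that it shares no prime factor with the relevant part: indeed $p^d - 1 = \prod_{k \mid d} \Phi_k(p)$, and $\Phi_n$ does not appear in this product since $n \nmid d$; the standard fact that distinct cyclotomic polynomials evaluated at an integer $p$ can only share a prime factor that divides $n$ (and such a prime divides $p^n-1$ only to a controlled power) lets me conclude that no nontrivial prime divisor $\ell$ of $\Phi_n(p)$ divides $p^d - 1$. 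Hence $\Phi_n(p) \mid (p^d-1)e$ together with $\gcd(\Phi_n(p), p^d-1) = 1$ gives $\Phi_n(p) \mid e$, i.e. $\log u \equiv 0 \pmod{\Phi_n(p)}$, and a fortiori $\log u \equiv 0 \pmod \ell$.

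Combining the two steps, $\log T' \equiv \log T + \log u \equiv \log T \pmod{\Phi_n(p)}$, and reducing further modulo the prime $\ell \mid \Phi_n(p)$ gives $\log T' \equiv \log T \pmod \ell$, as claimed. The hypothesis $\deg T < n$ plays no role in the congruence itself; it is only there to guarantee that $T$ (and $T'$, after multiplying by $u$ and reducing) is genuinely an element of $\FF_{p^n}$ represented in the expected way, so I would not dwell on it. The one place that needs care — and the main obstacle — is the coprimality claim $\gcd(\Phi_n(p), p^d-1) = 1$ for proper $d \mid n$: the honest statement is that any common prime factor must divide $n$, so strictly speaking one should either assume $\ell \nmid n$ (reasonable, since $\ell$ is a large prime used as the linear-algebra modulus) or phrase the conclusion as holding modulo the prime-to-$n$ part of $\Phi_n(p)$. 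I would state this mild hypothesis explicitly and then the argument above goes through verbatim.
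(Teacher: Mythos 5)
The paper does not reprove this lemma; it only cites it from \cite[Lemma 1]{AC:Guillevic15}, so your proposal cannot be compared to an in-paper proof, only checked on its own merits. Your overall strategy is the right one: reduce to showing $\log u \equiv 0 \pmod{\Phi_n(p)}$, using that $u$ lies in a subfield $\FF_{p^d}$ with $d \mid n$, $d < n$, so its multiplicative order divides $p^d-1$.

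The genuine gap is in the final divisibility step. Writing $e = \log u$, you correctly get $p^n-1 \mid (p^d-1)\,e$, but then you \emph{discard information} by weakening this to $\Phi_n(p) \mid (p^d-1)\,e$ and trying to cancel the factor $p^d-1$ via $\gcd(\Phi_n(p), p^d-1)=1$ --- and, as you yourself observe, that gcd need not equal $1$ (common primes must divide $n$). Having hit this obstruction, you propose to weaken the lemma (assume $\ell \nmid n$, or pass to the prime-to-$n$ part of $\Phi_n(p)$). That is unnecessary: keep the full $p^n-1$ on the left. Since $d\mid n$ gives $p^d-1 \mid p^n-1$, the relation $p^n-1 \mid (p^d-1)\,e$ is equivalent to
$$\frac{p^n-1}{p^d-1} \;\Big|\; e~,$$
and the cyclotomic factorization $x^n-1 = \prod_{k\mid n}\Phi_k(x)$ yields
$$\frac{p^n-1}{p^d-1} \;=\; \prod_{k\mid n,\ k\nmid d}\Phi_k(p)~.$$
Because $d$ is a \emph{proper} divisor of $n$, the index $k=n$ appears in this product, so $\Phi_n(p)$ divides $\frac{p^n-1}{p^d-1}$, hence $\Phi_n(p) \mid e$. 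This delivers $\log T' \equiv \log T \pmod{\Phi_n(p)}$ unconditionally --- no coprimality claim, no side hypothesis $\ell \nmid n$ --- and the mod-$\ell$ congruence is an immediate further reduction. In short: the ingredient you were missing is that you should divide $p^d-1$ out of $p^n-1$ (where it genuinely divides) rather than out of $\Phi_n(p)$ (where it need not), after which the coprimality worry evaporates. Your remark about the role of $\deg T < n$ is fine as an aside.
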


\section{The heart of our strategy: representing elements in the cyclotomic subgroup of a
  nonprime finite field with less coefficients}
\label{sec:general-strategy}

In the FFS setting, $n_1 = 1$ and usually $n_2$ is prime and our
technique cannot be helpful, but if $n$ is not
prime, our algorithm applies, and moreover in favorable cases Joux's  $L[1/4]$
algorithm and its variants
can be used and our technique can provide a further notable speed-up in the descent. 
For the implementations in small characteristic, the factor basis is made of the
irreducible polynomials of $\F_{p^{n_1}}[x]$ of very small degree,
e.g., of degrees 1, 2, 3, and 4 in \cite{EPRINT:ACCMOR16}.
Our aim is to improve the smoothness probability of a preimage $P \in
\F_{p^{n_1}}[x]$ of a given target $T \in \F_{(p^{n_1})^{n_2}}$ 
and for that we want to reduce the degree in
$x$ of the preimage $P$
(as a lift of $T$ in $\F_{p^{n_1}}[x]$, $P$ has degree at most $n_2-1$ in $x$), 
while keeping the property
$$ \log(\rho(P)) = \log T \mod \ell~,$$
where $\rho: \F_{p^{n_1}}[x] \to \F_{(p^{n_1})^{n_2}}$ is the
reduction modulo $\psi$.

Let $d$ denote the largest proper divisor of $n$, $1<d<n$ ($d$ might sometimes be equal
to $n_2$ in the QPA setting).
We will compute $P$ in $\F_{p^{n_1}}[x]$ of degree at most $n_2-d/n_1$
in $x$ (and coefficients in $\F_{p^{n_1}}$) such that
\begin{equation}
  \label{eq:P-eq-uT}
  P = u T \pmod{\psi}, \mbox{ where } u^{p^d-1}=1~.
\end{equation}
It means that we will cancel the $d/n_1-1$ higher coefficients (in
$\F_{p^{n_1}}$) of a preimage of $T$ in $\F_{p^{n_1}}[x]$.

There are two strategies: either handle coefficients in $\F_p$ or in
$\F_{p^{\gcd(d,n_1)}}$. We will consider the latter case.
Let $d'=d/\gcd(d,n_1)$ to simplify the notation, and let $[1,U,
\ldots, U^{d'-1}]$ be a polynomial basis of 
$\F_{p^{d'}}$. Every product $P = U^{i}T$ satisfies \eqref{eq:P-eq-uT}.
Define the $d' \times n_2$ matrix $L$ whose rows are made
of the coefficients (in $\F_{p^{n_1}}$) of $U^iT$ for $0\leq i \leq
d'-1$:
$$L_{d' \times n_2} =\begin{bmatrix}
  T \\
  UT \\
  \vdots \\
  U^{d'-1} T
\end{bmatrix} \in \mathcal{M}_{d',n}(\F_{p^{n_1}})~.$$
Then we compute a row-echelon form of this matrix by performing only
$\F_{p^{\gcd(n_1,d)}}$-linear operations over the rows, so that each
row of the echeloned matrix is a $\F_{p^{\gcd(n_1,d)}}$-linear
combination of the initial rows, that can be expressed as 
$$ P = \sum_{i=0}^{d'-1} \lambda_iU^iT = uT,~ \mbox{where }
\lambda_i \in \F_{p^{\gcd(n_1,d)}}~,~ U^i \in \F_{p^{d/\gcd(n_1,d)}}$$
so that $P = u T$ with $u^{p^d-1} = 1$.
Assuming that the matrix is lower-triangular (the other option being
an upper-triangular matrix), we take the first row of the matrix 
 as the coefficients of a polynomial in $\F_{p^{n_1}}[x]$ of degree at
most\footnote{$n_2-d/n_1$ is not necessarily an integer, meaning that the leading
  coefficient of the polynomial is some element in $\F_{p^{n_1}}$. Its
degree in $x$ is actually $n_2-\lceil d/n_1\rceil$.}
$n_2- d/n_1 $. 
This is formalized in Algorithm~\ref{alg:lower-degree-poly}.
\setlength{\algomargin}{\leftskip}
\begin{algorithm}
  \DontPrintSemicolon
  \caption{Computing a representation by a polynomial of smaller degree}
  \label{alg:lower-degree-poly}
  \KwIn{Finite field $\F_{p^n}$ represented as a tower
    $\F_{(p^{n_1})^{n_2}} = \F_{p^{n_1}}[x]/(\psi(x))$ (one may have $n_1=1$),
    a proper divisor $d$ of $n$ ($d\mid n$, $1<d < n$),
    $T \in \F_{p^n}$
  }
  \KwOut{$P \in \F_{p^{n_1}}[x]$ a polynomial
    of degree $\leq n_2- d/n_1$
    satisfying $ P \bmod \psi = u T$, where $u \in \F_{p^d}$}
  $d'=d/\gcd(n_1,d)$\;
  Compute a polynomial basis $(1, U, U^2, \ldots, U^{d'-1})$ of the 
    subfield $\F_{p^{d'}}$\;
  Define $L =
  \begin{bmatrix}
    T \\
    U T \\
    \vdots \\
    U^{d'-1} T \\
  \end{bmatrix}$ a $d'\times n_2$ matrix of coefficients in $\F_{p^{n_1}}$\;
  $M \gets$ RowEchelonForm$(L)$ with only $\F_{p^{\gcd(n_1,d)}}$-linear combinations \; 
  $P(x) \gets $ polynomial from the coefficients of the first row of
  $L$ \;
  \Return $P(x)$ \;
\end{algorithm}
We obtain the following Theorem~\ref{th:small-degree-preimage}.
\begin{theorem} \label{th:small-degree-preimage}
  Let $\F_{p^n}$ be a finite field represented
  as a tower $\F_{(p^{n_1})^{n_2}}$.
  Let $T \in \FF_{p^n}^{*}$ be an element which is not in a proper subfield
  of $\FF_{p^n}$.
  Let $d$ be the largest proper
  divisor of $n$, $1 < d < n$ ($n$ is not prime).
  Assume that $T$ is represented by a polynomial in $\F_{p^{n_1}}[x]$ of
  degree larger than $n_2 - d/n_1$.
Then there exists
a preimage $P$ of $T$, in $\F_{p^{n_1}}[x]$, of degree $n_2 - \lceil d/n_1
\rceil$ in $x$ and coefficients in $\F_{p^{n_1}}$, and such that
$$ \log(\rho(P)) = \log T \mod \Phi_n(p)~.$$
\end{theorem}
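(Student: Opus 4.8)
The plan is to show that Algorithm~\ref{alg:lower-degree-poly} returns a polynomial with the required properties; essentially the theorem is a correctness statement for that algorithm, so I would organize the proof around its three substantive steps. First I would fix $d' = d/\gcd(n_1,d)$ and observe that $\F_{p^{d'}}$ is a genuine subfield of $\F_{p^{n_1}} \subseteq \F_{p^n}$: indeed $d' \mid d \mid n$, and $d'$ divides $n_1$ up to the factor forced by $\gcd(n_1,d)$ — more precisely $\F_{p^{d'}} \cdot \F_{p^{n_1}} = \F_{p^{\mathrm{lcm}(d',n_1)}}$ and one checks $\mathrm{lcm}(d',n_1) \mid n$ so that all the elements $U^i$ and the scalars $\lambda_i \in \F_{p^{\gcd(n_1,d)}}$ live inside $\F_{p^n}$. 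Hence each $U^i T$ is a well-defined element of $\F_{p^n}$, its coordinate vector in the $\F_{p^{n_1}}$-basis $(1,x,\dots,x^{n_2-1})$ is the $i$-th row of $L$, and the row-echelon reduction using only $\F_{p^{\gcd(n_1,d)}}$-linear combinations produces a row whose associated polynomial $P$ satisfies $P \bmod \psi = \bigl(\sum_i \lambda_i U^i\bigr) T = u T$ with $u = \sum_i \lambda_i U^i \in \F_{p^{d'}} \cdot \F_{p^{\gcd(n_1,d)}} \subseteq \F_{p^d}$.

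Next I would pin down the degree. The key point is that the matrix $L$ has full row rank $d'$ over $\F_{p^{\gcd(n_1,d)}}$: if some $\F_{p^{\gcd(n_1,d)}}$-combination $\sum_i \lambda_i U^i T = 0$ in $\F_{p^n}$ with not all $\lambda_i$ zero, then since $T \neq 0$ we would get $\sum_i \lambda_i U^i = 0$, contradicting that $(1,U,\dots,U^{d'-1})$ is an $\F_{p^{\gcd(n_1,d)}}$-basis (equivalently a $\mathbb{F}_p$-independent family) of $\F_{p^{d'}}$. Therefore the row-echelon form has exactly $d'$ nonzero rows; taking the lower-triangular convention, the first row has its first $d'-1$ "high" $\F_{p^{n_1}}$-coordinates (those in degrees $n_2-1, n_2-2, \dots, n_2-d'+1$) eliminated, so the corresponding polynomial $P$ has $x$-degree at most $n_2 - d'$ coefficients-in-$\F_{p^{n_1}}$ worth of data; translated back to the statement's normalization this is degree $n_2 - \lceil d/n_1 \rceil$ in $x$ (the footnote in the excerpt already flags the ceiling subtlety when $d/n_1 \notin \ZZ$). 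One should also note $P \neq 0$: since $u \in \F_{p^d}$ with $d < n$ is in a proper subfield and $T$ is not, $uT$ cannot vanish and cannot itself drop into a proper subfield, so $P$ is a legitimate preimage of $T$ of the claimed degree.

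Finally, the logarithm congruence is immediate from Lemma~\ref{lemma:log equality up to subgroup elt}: we have $\rho(P) = uT$ with $u \in \F_{p^d}^{*}$, a proper subfield of $\F_{p^n}$, so $\log \rho(P) \equiv \log T \bmod \Phi_n(p)$. I would close by checking the nondegeneracy hypotheses actually hold — that the subfield is proper (from $d < n$) and that $u \neq 0$ (from $T \neq 0$ and $u$ being a product of units, or simply because $P \bmod \psi = uT$ is nonzero as just argued).

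The main obstacle I anticipate is not any deep idea but the bookkeeping around $\gcd$'s and the ceiling: verifying cleanly that $\F_{p^{d'}}$ and the scalar field $\F_{p^{\gcd(n_1,d)}}$ both embed in $\F_{p^n}$ and that their compositum sits inside $\F_{p^d}$, and then matching the "$n_2 - d/n_1$ coefficients" count with the "degree $n_2 - \lceil d/n_1\rceil$ in $x$" statement in the two regimes $n_1 \mid d$ and $n_1 \nmid d$. The linear-algebra core (full rank of $L$, existence of the echelon row with the right support) is routine once the subfield structure is set up correctly.
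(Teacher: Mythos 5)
Your proof follows the same route as the paper's own argument: invoke Algorithm~\ref{alg:lower-degree-poly}, argue that $L$ has full rank, row-echelon-reduce to kill top coefficients, observe that the multiplier $u=\sum_i\lambda_iU^i$ lies in $\F_{p^d}$, and conclude by Lemma~\ref{lemma:log equality up to subgroup elt}. The structure is right, but the degree count — precisely the bookkeeping you yourself flagged as the potential obstacle — is not resolved correctly and contains a real error. You claim that after the $\F_{p^{\gcd(n_1,d)}}$-row-echelon reduction on $d'$ rows, the first row has its top $d'-1$ $\F_{p^{n_1}}$-coordinates eliminated, giving $x$-degree at most $n_2-d'$, and then assert that this ``translates'' to $n_2-\lceil d/n_1\rceil$. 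That translation does not hold: $d'=d/\gcd(n_1,d)\geq d/n_1$, so $d'\geq\lceil d/n_1\rceil$ with strict inequality whenever $n_1\nmid d$ (e.g.\ $n_1=2$, $d=3$ gives $d'=3>2=\lceil d/n_1\rceil$), so $n_2-d'$ and $n_2-\lceil d/n_1\rceil$ genuinely differ. The underlying mistake is that eliminating a full $\F_{p^{n_1}}$-coefficient with row operations restricted to $\F_{p^{\gcd(n_1,d)}}$ (a restriction the algorithm must impose so that $u$ ends up in $\F_{p^d}$) costs $n_1/\gcd(n_1,d)$ degrees of freedom, not one. With $d'$ rows you have $d'-1$ degrees of freedom over $\F_{p^{\gcd(n_1,d)}}$, which kills only $\lfloor(d'-1)\gcd(n_1,d)/n_1\rfloor=\lceil d/n_1\rceil-1$ full $\F_{p^{n_1}}$-coefficients, giving degree at most $n_2-\lceil d/n_1\rceil$ exactly as the theorem states — but your claimed $n_2-d'$ is an unattainable overclaim.

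A second, smaller slip: you write that $(1,U,\ldots,U^{d'-1})$ being an $\F_{p^{\gcd(n_1,d)}}$-independent family is ``equivalently'' an $\F_p$-independent family. This equivalence is false in general: a family in $\F_{p^{d'}}$ that is $\F_p$-independent need not remain independent over $\F_{p^{g}}$ when $\gcd(g,d')>1$, since the $\F_{p^g}$-span of $\F_{p^{d'}}$ is $\F_{p^{\mathrm{lcm}(g,d')}}$ and has $\F_{p^g}$-dimension $d'/\gcd(g,d')$. The paper itself states the full-rank claim tersely, so I would not demand a complete treatment here, but the explicit ``equivalently'' is an assertion that would need justification or qualification. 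The remaining parts of your argument — the lcm/compositum check that places $u$ in $\F_{p^d}$, the nonvanishing of $u$ and $P$, and the final appeal to Lemma~\ref{lemma:log equality up to subgroup elt} — match the paper and are fine.
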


\begin{proof}
We use Algorithm~\ref{alg:lower-degree-poly} to compute $P$.
The matrix has full rank since the $U^i$s form a polynomial basis of
$\F_{p^{d'}}$. The linear combinations involve $T$ and elements
in $\F_{p^{d/\gcd(n_1,d)}}$ and $\F_{p^{\gcd(n_1,d)}}$ that are in the
proper subfield $\F_{p^d}$ by construction. The first row after
Gaussian elimination will have at least $d/n_1 -1$
coefficients equal to zero at the right, and will represent a polynomial $P$ of degree
at most $n_2 - d/n_1$, that satisfies $P = u T \pmod{\psi}$ where $u = \sum
\lambda_i U^i \in \F_{p^d}$, since in the
process, $T$ was multiplied only
by elements whose images in $\F_{(p^{n_1})^{n_2}}$ are in the subfield
$\F_{p^d}$. We have $\rho(P) = uT$, $u \in \F_{p^{d}}$, and the equality
of logarithms follows by Lemma~\ref{lemma:log equality up to subgroup elt}. 
\end{proof}

We can now directly apply Algorithm~\ref{alg:lower-degree-poly} to
improve the initial splitting algorithm in practice.

\section{Application to small characteristic finite fields, \\
  and cryptographic-size examples}
\label{sec:app-small-char}
In all the examples of small characteristic finite fields from pairings, $n$ is
not prime, for instance $n=6\cdot 509$.
The notation in \cite{PAIRING:AMOR13} was $n=l k$, with the property $p^l
\approx k$. With our notation, $n_1=l$ and $n_2 = k$. 

\subsection{Algorithm}
We directly use Algorithm~\ref{alg:lower-degree-poly} as a subroutine of
Algorithm~\ref{alg:boot in small char}. Then to improve it in
practice, we list valuable modifications.

\begin{algorithm}
  \DontPrintSemicolon
  \caption{Initial splitting in small characteristic with the subfield technique}
  \label{alg:boot in small char}
  \KwIn{Finite field $\F_{p^n}$ of small characteristic
    (e.g., $p=2,3$), with a tower representation $ \F_{p^n}
    =\F_{(p^{n_1})^{n_2}} = \F_{p^{n_1}}[x]/(I(x))$ (one may have $n_1=1$),
    generator $g$ (of the order $\ell$ subgroup of the cyclotomic
    subgroup of $\F_{p^n}$),
    target $T_0 \in \F_{(p^{n_1})^{n_2}}$,
    smoothness bound $B_1$}
  \KwOut{$t$, $P \in \F_{p^{n_1}}[x]$ a polynomial
    of degree $\leq n_2-d/n_1$
    such that $\vlog_g \rho(P) = t+\vlog_g T_0 \mod \ell$,
    and $P(x)$ is $B_1$-smooth (w.r.t. its degree in $x$)}
  $d \gets$ the largest divisor of $n$, $1 < d < n$ \;
  $d' \gets d/\gcd(d,n_1)$\;
  Compute $U(x) \in \F_{(p^{n_1})^{n_2}}$ s.t. $(1, U, U^2, \ldots, U^{d'-1})$ is a
  polynomial basis of the subfield $\F_{p^{d'}}$\;
  \Repeat{$P(x)$ is $B_1$-smooth}{
    take $t \in \{1, \ldots, \ell-1\}$ at random \;
    $T \gets g^t T_0$ in $\F_{(p^{n_1})^{n_2}}$ \;
    Define $L =
    \begin{bmatrix}
      T \\
      U T \\
      \vdots \\
      U^{d'-1} T \\
    \end{bmatrix}$ a $d'\times n_2$ matrix of coefficients in $\F_{p^{n_1}}$\;
    $M \gets$ RowEchelonForm$(L)$  (with $\F_{p^{\gcd(d,n_1)}}$-linear Gaussian elimination)\;
    $P(x) \gets $ the polynomial of lowest degree made of the first row of
    $L$ \;
  }
  \Return $t, P(x)$ \;
\end{algorithm}

\begin{remark} \label{rem:double-side-row-echelon}
As was pointed out to us by
F.~Rodr\'{i}guez-Henr\'{i}quez~\cite{Rodriguez15,EPRINT:ACCMOR16}, the elements of 
the form $x^i R(x)$ where $R$ itself is of degree $\leq n_2-d/n_1$ 
are evenly interesting because the discrete logarithm of
$x^i$ can be deduced from the discrete logarithm of $x$, which is
known after linear algebra.

So we can increase the number of elements tested for $B_1$-smoothness
for each exponent $t$ by a factor $d'$ almost for free in the
following way. We again run a
Gaussian elimination algorithm on the matrix $M$ but in the reverse
side, for instance from row one to row $d'$ and left to right if it was done from row
$d'$ to row one and right to left the first time.  
The matrix is in row-echelon form on the left-hand side and on the
right-hand side (the upper right and lower left corners are filled
with zeros). We obtain a matrix $N$ of the form
$$ N =
\begin{bmatrix}
  *      & \ldots & *     & *     & 0      & \ldots & 0 \\
  0      & \ddots &       &       & \ddots & \ddots & \vdots \\
  \vdots & \ddots & \ddots&       &        & \ddots & 0 \\
  0      & \ldots & 0     & *     & \ldots &   *    & * \\
\end{bmatrix}~.$$
The $i$-th row represents a polynomial $P_i' = x^{e_i}P_i$,
where $P_i$ is of degree at most $n_2-d/n_1$, and $e_i \approx (i-1)\gcd(n_1,d)/n_1$.
Since $x$ is in the factor basis (by construction, like all the degree
one polynomials), its logarithm is 
known at this point (after the relation collection and linear algebra
steps), hence the logarithm of any power $x^{e_i}$ is known. It remains to
compute the discrete logarithm of $P_i$.
\end{remark}

In practice there are some technicalities: in the second Gaussian
elimination, if the leading coefficient is zero, then two rows are
swapped, and it cancels the previous Gaussian elimination (computed
at the other end of the matrix) for that row. 
We end up with a matrix which is in row-echelon form on the right and
almost row-echelon form on the left (or vice-versa).
Since each set of subsequent $n_1/\gcd(n_1,d)$ rows produces
polynomials of the same degree, swapping two rows from the same set
will not
change the degree in $x$ of the polynomial.
In average (this is what we observed in our experiments for
$\F_{3^{6\cdot 509}}$ and $\F_{3^{5\cdot 479}}$), 
some rare polynomials will have a degree in $x$ increased by one or two.
This second Gaussian elimination increases the number of tests by a
factor $d'$ at a very cheap cost, since in fact it allows one to share
the cost of computing the $U^iT$ and the two Gaussian eliminations
over $d'$ tests.

\begin{remark}
  If $\gcd(d,n_1)>1$ we can increase the number of rows by a small factor. 
  We perform linear combinations of $n_1/\gcd(d,n_1)$ subsequent rows (all
  giving a polynomial of same degree):
  $\sum_{0\leq j \leq n_1/\gcd(d,n_1)} \mu_{j}r_j$ where $\mu \in \F_{p^{\gcd(d,n_1)}}$,
  and it will result in new rows and new polynomials of same degree.
\end{remark}

\begin{remark}
\label{rem:linear-combinations-rows}
Other improvements are possible \cite{Rodriguez15,EPRINT:ACCMOR16}, for instance
computing \linebreak {$\F_{p^{\gcd(n_1,d)}}$-linear} combinations over a small
number of rows corresponding to polynomials of almost the same 
degree. The resulting polynomial will have degree increased by one
or two, which does not significantly affect its $B_1$-smoothness
probability in practice for cryptographic sizes. This technique
allows one to produce many more candidates, at a very cheap cost of
linear operations in $\F_{p^{n_1}}[x]$.  
\end{remark}

\subsection{Complexity analysis}

\subsubsection{Cost of computing one preimage $P \in \F_{p^{n_1}}[x]$ in the initial
  splitting step.}
We use the notation of Algorithm~\ref{alg:boot in small char}: let $d$ be the largest
proper divisor of $n$ ($d\mid n$, $1<d<n$), and let $d'=d/\gcd(d,n_1)$.
Since $d' \mid d \mid n=n_1n_2$ and $\gcd(d',n_1)=1$,
then $d' \mid n_2$ and $d' \leq n_2$.
The computation of all the $U^iT$ of the matrix $L$ costs at most $d'n_2^2$
multiplications in $\F_{p^{n_1}}$, since a schoolbook multiplication
in $\F_{(p^{n_1})^{n_2}}$ costs $n_2^2$ multiplications in
$\F_{p^{n_1}}$. There are $d'$ such multiplications. 
The complexity of a reduced row-echelon form computation of a $(d'\times
n_2)$-matrix, $d' \leq n_2$, is less than $O(d'^2n_2)$
multiplications in $\F_{p^{n_1}}$ \cite[\S 13.4.2]{HFF:DumPer13}.
To simplify, we consider that the computation of the matrix $L$ and of two
Gaussian eliminations is done in time at most $O(d'n_2^2)$.
This cost is shared over $d'$ polynomials $P_i$ to be tested for $B_1$-smoothness.
In this way, the complexity of computing a preimage $P$ with our
initial splitting algorithm is the same as in the Waterloo algorithm: $O(n_2^2)$,
and moreover the smoothness probabilities are much higher for the
targeted cryptographic cases coming from supersingular pairing-friendly curves.
We also replace two $B_1$-smoothness tests by only one, and that might save
some time in practice (this saving disappears in the $O$ notation). 
We present the theoretical costs in
Tables~\ref{tab:cost-splitting} and~\ref{tab:cost-initial-splitting} from \cite{FlaGouPan01}. 
XGCD stands for extended Euclidean algorithm, SQF stands for
SQuare-free Factorization, DDF stands for Distinct Degree Factorization,
and EDF stands for Equal Degree Factorization.
All the polynomials to be factored are of degree smaller than $n_2$;
we take $n_2$ as an upper bound to get the costs of Table~\ref{tab:cost-initial-splitting}.

\begin{table}[htbp]
  \caption{Costs for the initial splitting step. The
    preimage obtained with Algorithm~\ref{alg:boot in small char} has
    degree $d_P \leq n_2 - d/n_1$.
    The Waterloo algorithm~\cite{BFMV84,C:BlaMulVan84}
    produces two polynomials of degree $d_P=\lfloor n_2/2 \rfloor$.} 
  \label{tab:cost-splitting}
  \centering
  \begin{tabular}{|l|c|}
    \hline Factorization & cost     \\
    \hline
Square-free (SQF)     & $O(d_P^2)$             \\
Distinct degree (DDF) & $O(d_P^3 \log p^{n_1})$ \\
Equal degree (EDF)    & $O(d_P^2 \log p^{n_1})$ \\
\hline
  \end{tabular}
\end{table}

\begin{table}[ht]
  \caption[tab:cost-per-test]{Cost in multiplications in $\F_{p^{n_1}}$ to
    compute one preimage to be tested for smoothness, in the initial splitting
    step.}
  \label{tab:cost-initial-splitting}
  \centering
  \begin{tabular}{|c|c|c|c|}
    \hline Computation & XGCD$(T,I)$ &  \multicolumn{2}{c|}{
  matrix $[U^iT]_{0\leq i \leq d'-1}$ and row echelon form}\\
\hline Algorithm &  Waterloo & this work, Alg.~\ref{alg:boot in small char} &
                      this work + Rem.~\ref{rem:double-side-row-echelon} \\
    \hline Cost & $O(n_2^2)$ & $O(d'n_2^2)$ & $O(n_2^2)$ \\
    \hline
  \end{tabular}
\end{table}

\subsubsection{running time of the initial splitting step}
To start, we recall some results on the smoothness probability of a
polynomial of given degree.
\begin{definition}
  Let $N_q(b; d)$ denote the number of monic polynomials over $\F_q$ of degree $d$ which are
  $b$-smooth.
Let $N_q(b; d_1, d_2)$
denote the number of coprime pairs of monic polynomials over $\F_q$ of
degrees $d_1$ and $d_2$, respectively, which are $b$-smooth.

Let $\Prb_q(b;d)$ denote the probability of a monic polynomial over
$\F_q$ of degree $d$ to be $b$-smooth.
Let $\Prb_q(b;d_1,d_2)$ denote the probability of two coprime monic polynomials over
$\F_q$ of degrees $d_1$ and $d_2$ to be both $b$-smooth.
\end{definition}

Odlyzko gave the following estimation for $\Prb_q(b;d)$ in \cite[(4.5), p.~14]{EC:Odlyzko84}.
\begin{equation}
  \label{eq:prob poly b-smooth}
  \Prb_{q}(b,d)^{-1} = \exp\left((1+o(1)) \frac{d}{b} \log_e \frac{d}{b}\right) \mbox{
  for } d^{1/100} \leq b \leq d^{99/100}~.
\end{equation}

Writing the smoothness bound degree $b = \log L_Q[\alpha_b, c_b]/\log
p^{n_1} $ to match \linebreak Odlyzko's convention $b= c_b n_2^{\alpha_b}(\log n_2)^{1-\alpha_b}$, and
the degree of the polynomial to be tested for smoothness
$d=an_2$, where $a \in ]0,1[$ and $n_2=\log Q/\log p^{n_1}$, one
obtains 
$$
\Prb_{p^{n_1}}(b,d) = L_Q\left[1-\alpha_b,
  -(1-\alpha_b)a/\gamma\right]~, \mbox{ where } Q = p^{n_1n_2}.
$$ 

\begin{theorem}[{\cite[Theorem 1]{DCC:DrmPan02}}]
\label{th:DrmPan02}
  Let $\delta > 0$ be given. Then we have, uniformly for $b,d_1,d_2
  \to \infty $ with $d_1^\delta \leq b \leq d_1^{1-\delta}$ and 
$d_2^\delta \leq b \leq d_2^{1-\delta}$,
$$N_q(b; d_1, d_2) \sim \left( 1 - \frac{1}{q} \right) N_q(b;d_1)
N_q(b; d_2) ~.$$
\end{theorem}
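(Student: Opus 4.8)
The statement says that a random $b$-smooth monic polynomial of degree $d_1$ and an independent random $b$-smooth monic polynomial of degree $d_2$ are coprime with the same limiting probability, $1-1/q$, as two \emph{unconditioned} random monic polynomials. The latter is exact for $d_1,d_2\ge 1$: writing $Z(z)=\sum_{f\text{ monic}}z^{\deg f}=1/(1-qz)=\prod_{P}(1-z^{\deg P})^{-1}$ (product over monic irreducibles), the bivariate generating function of coprime pairs is $Z(x)Z(y)/Z(xy)=(1-qxy)/\bigl((1-qx)(1-qy)\bigr)$, whose $x^{d_1}y^{d_2}$-coefficient is $q^{d_1+d_2}(1-1/q)$. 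The plan is to rerun this computation with $Z$ replaced by the generating function of $b$-smooth monic polynomials $S_b(z)=\sum_{d\ge0}N_q(b;d)z^d=\prod_{\deg P\le b}(1-z^{\deg P})^{-1}$, and to show that the loss of rationality costs only a factor $1+o(1)$.

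First I would record the exact identity obtained by classifying a pair of $b$-smooth monic polynomials $(f_1,f_2)$ by its gcd (a divisor of a $b$-smooth polynomial is $b$-smooth, the two cofactors are then coprime and $b$-smooth, and conversely): $S_b(x)\,S_b(y)=S_b(xy)\sum_{d_1,d_2}N_q(b;d_1,d_2)\,x^{d_1}y^{d_2}$. Equivalently, with $f_i$ uniform among $b$-smooth monic polynomials of degree $d_i$, Möbius inversion over common irreducible factors gives
$$\Prb\!\left[\gcd(f_1,f_2)=1\right]=\sum_{e\ge0}M_b(e)\,\frac{N_q(b;d_1-e)\,N_q(b;d_2-e)}{N_q(b;d_1)\,N_q(b;d_2)},\qquad M_b(e)=[z^e]\,S_b(z)^{-1}=[z^e]\!\!\prod_{\deg P\le b}(1-z^{\deg P}).$$
Since $\prod_{\deg P\le b}(1-z^{\deg P})=(1-qz)\prod_{\deg P>b}(1-z^{\deg P})^{-1}$ coincides with $1-qz$ through degree $b$, one gets $M_b(0)=1$, $M_b(1)=-q$, and $M_b(e)=0$ for $2\le e\le b$, so the sum collapses to
$$1-q\cdot\frac{N_q(b;d_1-1)}{N_q(b;d_1)}\cdot\frac{N_q(b;d_2-1)}{N_q(b;d_2)}\;+\;\sum_{e>b}M_b(e)\,\frac{N_q(b;d_1-e)\,N_q(b;d_2-e)}{N_q(b;d_1)\,N_q(b;d_2)}.$$
(Alternatively one can skip the Möbius bookkeeping and extract $[x^{d_1}y^{d_2}]\,S_b(x)S_b(y)/S_b(xy)$ by a two-variable saddle-point integral: the saddle sits near $(1/q,1/q)$, where $\log S_b$ diverges, whereas the coupling factor $1/S_b(xy)$ — evaluated near $1/q^2$, safely away from the limiting pole of $S_b$ at $1/q$ — tends to $1-1/q$ and varies slowly, hence factors out.)

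It remains to establish two uniform facts about $N_q(b;d)$ for $d^\delta\le b\le d^{1-\delta}$. The first is the consecutive-ratio asymptotic $N_q(b;d-1)/N_q(b;d)=q^{-1}(1+o(1))$, which turns the main term above into $1-q\cdot q^{-1}\cdot q^{-1}=1-1/q$; the second is the tail bound $\sum_{e>b}|M_b(e)|\,N_q(b;d_1-e)N_q(b;d_2-e)=o\bigl(N_q(b;d_1)N_q(b;d_2)\bigr)$. The tail is the easy part: write $N_q(b;d_i-e)/N_q(b;d_i)=q^{-e}\,\Prb_q(b;d_i-e)/\Prb_q(b;d_i)$, bound $|M_b(e)|\le 2q^e$ (the number of monic polynomials of degree $e$ with no irreducible factor of degree $\le b$ is at most $q^e$), and use \eqref{eq:prob poly b-smooth} together with the convexity of $t\mapsto t\log t$ to bound $\Prb_q(b;d_i-e)/\Prb_q(b;d_i)$ by $(d_i/b)^{(1+o(1))e/b}$; since $(d_i/b)^{1/b}\to 1$ in the given range, each term is $O\bigl(q^{-e/2}\bigr)$ and the sum over $e>b$ is $O\bigl(q^{-b/2}\bigr)\to 0$. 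The real obstacle, and the only genuinely hard point, is the consecutive-ratio asymptotic — or, more to the point, any sufficiently sharp \emph{uniform} description of $N_q(b;d)$ valid throughout $d^\delta\le b\le d^{1-\delta}$. This is exactly where the delicate analysis sits, resting on saddle-point and Tauberian estimates for polynomial smoothness in the work of Car, Soundararajan, Manstavi\v{c}ius, and Panario--Pittel--Richmond--Viola; granted these, the assembly above is routine.
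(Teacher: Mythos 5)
This is a cited result: the paper invokes it as~\cite[Theorem~1]{DCC:DrmPan02} (Drmota--Panario) and gives no proof of its own, so there is nothing in the paper to compare your argument against line by line. That said, your sketch is a sound reduction, and it is close in spirit to the published proof, which also proceeds via the generating-function identity $\sum_{d_1,d_2}N_q(b;d_1,d_2)x^{d_1}y^{d_2}=S_b(x)S_b(y)/S_b(xy)$ together with uniform singularity/saddle-point analysis of $S_b$.

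The combinatorics are right: the gcd decomposition giving $S_b(x)S_b(y)=S_b(xy)\sum N_q(b;d_1,d_2)x^{d_1}y^{d_2}$, the identity $S_b(z)^{-1}=(1-qz)\prod_{\deg P>b}(1-z^{\deg P})^{-1}$ forcing $M_b(0)=1$, $M_b(1)=-q$, and $M_b(e)=0$ for $2\le e\le b$, and the crude bound $|M_b(e)|\le 2q^e$ are all correct. The tail estimate also goes through: writing each term as $2q^{-e}\,\Prb_q(b;d_1-e)\Prb_q(b;d_2-e)/(\Prb_q(b;d_1)\Prb_q(b;d_2))$ and bounding the probability ratios by $(d_i/b)^{(1+o(1))e/b}$, the exponent $\log(d_1d_2/b^2)/b$ is $o(\log q)$ in the admissible range $d_i^\delta\le b$, so the terms decay geometrically in $e$ and the sum over $e>b$ vanishes. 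One small caveat: for $e$ near $d_i$ the degree $d_i-e$ falls outside the range where Odlyzko's estimate~\eqref{eq:prob poly b-smooth} applies, so you would need to patch in the trivial bound $\Prb_q(b;d_i-e)\le 1$ there, but this costs nothing since the $q^{-e}$ factor already dominates. The genuine gap you correctly flag is the uniform consecutive-ratio asymptotic $N_q(b;d-1)/N_q(b;d)\sim q^{-1}$ over $d^\delta\le b\le d^{1-\delta}$; this is indeed the content of the theorem you are trying to prove, repackaged, and deferring it to the smoothness literature means your argument is a valid reduction but not a self-contained proof. Drmota--Panario's paper establishes exactly this kind of uniform control on $N_q(b;d)$ as part of the same singularity analysis, so in the original source it is not an independent prerequisite but rather the technical core.
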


\begin{corollary}[{\cite[Theorem 1]{DCC:DrmPan02}}]
  Let $\delta > 0$ be given. Then we have, uniformly for $b,d_1,d_2
  \to \infty $ with $d_1^\delta \leq b \leq d_1^{1-\delta}$ and 
$d_2^\delta \leq b \leq d_2^{1-\delta}$,
$$ \Prb_q(b; d_1,d_2) \sim \left( 1 - \frac{1}{q} \right) \Prb_q(b; d_1)
\Prb_q(b; d_2)~.$$
\end{corollary}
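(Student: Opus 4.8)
The plan is to read the statement off from Theorem~\ref{th:DrmPan02} by passing from absolute counts to densities, i.e.\ by dividing through by the total number of pairs under consideration. First I would record the elementary normalizations hidden in the definitions of $\Prb_q$. There are exactly $q^d$ monic polynomials of degree $d$ over $\F_q$, so that $\Prb_q(b;d) = N_q(b;d)/q^d$. Likewise, the quantity $\Prb_q(b;d_1,d_2)$ is the probability that, when a monic polynomial of degree $d_1$ and a monic polynomial of degree $d_2$ are chosen uniformly and independently, they are coprime and both $b$-smooth; hence $\Prb_q(b;d_1,d_2) = N_q(b;d_1,d_2)/q^{d_1+d_2}$. (Should one prefer to normalize instead by the number of \emph{coprime} pairs, one invokes the classical identity that there are exactly $(1-1/q)\,q^{d_1+d_2}$ coprime pairs of monic polynomials of positive degrees $d_1$ and $d_2$ over $\F_q$; it is precisely this factor that accounts for the surviving $1-1/q$ passing between the two conventions, so I would simply fix the convention to match the one underlying Theorem~\ref{th:DrmPan02}.)

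With the normalization pinned down, the proof is one line. Dividing the asymptotic equivalence of Theorem~\ref{th:DrmPan02} by $q^{d_1+d_2}$ gives
$$ \Prb_q(b;d_1,d_2) \;=\; \frac{N_q(b;d_1,d_2)}{q^{d_1+d_2}} \;\sim\; \Bigl(1-\frac{1}{q}\Bigr)\frac{N_q(b;d_1)}{q^{d_1}}\cdot\frac{N_q(b;d_2)}{q^{d_2}} \;=\; \Bigl(1-\frac{1}{q}\Bigr)\Prb_q(b;d_1)\,\Prb_q(b;d_2). $$
The hypotheses transfer verbatim: the same conditions $b,d_1,d_2\to\infty$ with $d_1^\delta\le b\le d_1^{1-\delta}$ and $d_2^\delta\le b\le d_2^{1-\delta}$ are exactly those under which Theorem~\ref{th:DrmPan02} applies, and dividing by the nonzero, $b,d_1,d_2$-independent factor $q^{d_1+d_2}$ disturbs neither the $\sim$ relation nor its uniformity.

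I do not expect a genuine obstacle here: the content is entirely in Theorem~\ref{th:DrmPan02}, whose proof rests on the analytic estimates of Drmota and Panario for counts of smooth polynomials, whereas the corollary is a formal consequence obtained by rescaling. The only point requiring care is bookkeeping --- making sure the denominator chosen for $\Prb_q(b;d_1,d_2)$ is consistent with the one used for $\Prb_q(b;d)$ and with the convention in Theorem~\ref{th:DrmPan02}, since an inconsistent choice would make the factor $1-1/q$ spuriously appear or disappear. Once that is fixed, nothing further is needed.
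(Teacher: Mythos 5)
Your derivation is correct, and since the paper states this corollary without a separate proof---attributing it directly to the cited Drmota--Panario theorem---the rescaling you describe is exactly the intended argument. The normalization caveat you raise is well-taken: the paper's phrasing of $\Prb_q(b;d_1,d_2)$ could plausibly be read as a conditional probability (which would absorb the factor $1-\tfrac{1}{q}$), but the survival of that factor in the stated result, and its subsequent application to the $\GF(3^{6\cdot 509})$ example where the pair probability is written $(1-1/3^6)\Prb_{3^6}^2(254,b)$, confirm the unconditional (all-pairs) normalization by $q^{d_1+d_2}$ that you adopt.
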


We can now compare the Waterloo algorithm with this work.
Assuming that $B_1 = \log_{p^{n_1}} L_{p^n}[2/3, \gamma]$ for a
certain $\gamma$, then the 
probability of a polynomial of degree $an_2$, $0<a<n_2$, to be
$B_1$-smooth is $L_{p^n}[1/3, -a/(3\gamma)]$.
In the Water\-loo algorithm, two polynomials of degree $n_2/2$ should be
$B_1$-smooth at the same time, and the expected running time to find such a
pair is $L_{p^n}[1/3, 1/(3\gamma)]$ (the square of $L_{p^n}[1/3, 1(6\gamma)]$).
In our algorithm, a polynomial of degree $\lfloor n_2-d/n_1\rfloor =
\lfloor n_2(1-d/n) \rfloor$ is tested
for $B_1$-smoothness, so finding a good one requires
\begin{equation}
L_{p^n}[1/3, a/(3\gamma)]\mbox{ tests, where }a \approx 1-d/n~,  
\end{equation} 
which is always faster than the Waterloo algorithm, for which
$a=1$. When $n$ is even (this is always the case for 
finite fields of supersingular pairing-friendly curves), one can
choose $d=n/2$, hence $a=1/2$ and our algorithm has running time the
square root of the running time of the Waterloo algorithm.

\subsection{Improving the record computation in GF$({3^{6 \cdot 509}})$}
\label{subsec:3^6*509}
Adj, Menezes, Oliveira, and Rodr{\'i}guez-Henr{\'i}quez estimated in
{\cite{PAIRING:AMOR13}} the cost to compute discrete logarithms in the 
4841-bit finite field GF$(3^{6\cdot 509})$ and announced their
record computation in July 2016 \cite{NMBRTHY:ACCMORR16}. 
The details of the computations are available in Adj's PhD thesis
\cite{PhD:Adj16} and the details for initial splitting and descent can be
found in \cite{Master:CanalesMartinez15}. 
The elements are represented by polynomials of degree at most 508
whose coefficients are in $\F_{3^6}$. In this case $n_1=6$ and $n_2=509$.
The initial splitting made with the Waterloo algorithm outputs two
polynomials of degree 254. The probability that two independent and
relatively prime polynomials of degree 254 over $\F_{3^6}$ are
simultaneously $b$-smooth is $(1-1/3^6) \Pr_{{3^6}}^2(254,b)$
\cite{DCC:DrmPan02}.
The term $(1-1/3^6)$ is negligible in practice for the values that we
are considering.

\subsubsection{Improvements}
Our Algorithm~\ref{alg:boot in small char} outputs \emph{one}
polynomial of degree 254, whose probability to be $b$-smooth is
$\Pr_{{3^6}}(n,b)$, i.e., the square root of the previous one. So
we can take a much smaller $b$ while reaching the same probability as
before with the Waterloo algorithm. We list in 
Table~\ref{tab:p2(n,b) and p(n,b)}, p.~\pageref{tab:p2(n,b) and p(n,b)},
the values of $b$ to obtain a probability between $2^{-40}$ and $2^{-20}$. 
For instance, if we allow $2^{30}$ trials, then we can set $b=28$ with our
algorithm, instead of $b= 43$ previously:
 we have $\Pr_{{3^6}}^2(254,43) = 2^{-30.1}$, and we only need to
take $b=28$ to get the same probability with this work: $\Pr_{{3^6}}(254,28) =
2^{-29.6}$. 
This will provide a good practical speed-up of the descent phase: much
fewer elements need to be ``reduced'': this reduces the initial width
of the tree, and they are of much smaller
degree: this reduces the depth of the descent tree.

\subsubsection{A 30-smooth initial splitting}
The finite field is represented with $n_1=6$ and $n_2=509$, that is, as a first extension
$\F_{3^6} = \F_{p^{n_1}} = \F_{3}[y]/(y^6 + 2 y^4 + y^2 + 2 y + 2)$, then a second
extension $\F_{3^{6\cdot 509}} = \F_{3^6}[x]/(I(x))$, where $I(x)$ is
the degree 509 irreducible factor of $h_1x^{q_1} - h_0$, where $q_1=p^{n_1}$, $h_1 = x^2
+ y^{424} x$, and $h_0 = y^{316} x + y^{135}$. 
The generator is $g = x + y^2$. 
As a proof of concept, we computed a 30-smooth initial splitting of
the target
$T_0 = \sum_{i=0}^{508} (y^{\lfloor \pi (3^6)^{i+1} \rfloor} \bmod
3^6) x^i$, with the parameters $d=3\times 509$, $d'=d/\gcd(d,n_1) =
509$. Each trial $g^t T_0$ produces $d'=509$ polynomials to test for
smoothness.
We found that  $g^{47233} T_0 = u v x^{230} P$, where $u=1 \in
\F_{3^6}$, $v \in \F_{3^{3\cdot 509}}$, and $P$ is of degree
255 and 30-smooth. The equality $(g^{47233} T_0)^{\frac{p^n-1}{\ell}} =
(u v x^{230} P)^{\frac{p^n-1}{\ell}}$ is satisfied. 
 The explicit value of $P$ is available at
 \url{https://members.loria.fr/AGuillevic/files/F3_6_509_30smooth.mag.txt}.

The whole computation took less than 6 days (real time) on 48 cores Intel Xeon
E5-2609 at 2.40GHz (274 core days, i.e., 0.75 core-years).
This is obviously an overshot compared to the estimate of $2^{26.6}$, but
this was done with a nonoptimized Magma implementation.

As a comparison, with the classical Waterloo algorithm, Adj et al.~
computed a 40-smooth initial splitting in 51.71 CPU (at 2.87GHz) years
\cite[Table~5.2, p.~87]{PhD:Adj16} and \cite{NMBRTHY:ACCMORR16}.
They obtained irreducible polynomials of degree 40, 40, 39, 38, 37,
and seven polynomials of degree between 22 and 35.
They needed another 9.99 CPU years (at 2.66 GHz) to compute a
classical descent from 40-smooth to 21-smooth polynomials.
A complete comparison can be found in \cite{EPRINT:ACCMOR16} and \cite{ECC:Adj16}.
In \cite{EPRINT:ACCMOR16}, Adj et al.~estimated that with our
Algorithm~\ref{alg:boot in small char} enriched as in Remarks~\ref{rem:double-side-row-echelon}
and~\ref{rem:linear-combinations-rows}, it is possible to compute 
discrete logarithms in $\F_{3^{6\cdot 709}}$ at the same cost as in 
$\F_{3^{6\cdot 509}}$ with the former Waterloo algorithm.

\begin{table}[hbtp]
\caption{Smoothness probabilities of polynomials over finite fields,
  comparison of the Waterloo algorithm and Algorithm~\ref{alg:boot in small char}.
The values were computed with Odlyzko's induction
formula~\cite{EC:Odlyzko84} and Drmota and Panario's
Theorem~\ref{th:DrmPan02}, as in \cite{Master:CanalesMartinez15}.} 
\label{tab:smoothness-probas}
  \centering
\subfloat[{Probabilities for $\GF(3^{6\cdot 509})$}]
{
  \label{tab:p2(n,b) and p(n,b)}
  \begin{tabular}{|c|c||c|c|}
    \hline \multicolumn{2}{|c||}{Waterloo alg.} & \multicolumn{2}{c|}{Algorithm~\ref{alg:boot in small char}} \\
    \hline $b$ & $\Pr_{3^6}^2(254,b)$ & $b$ & $\Pr_{3^6}(254, b)$ \\
    \hline  
     36 & $2^{-40.1}$ & 22 & $2^{-42.3}$ \\
     37 & $2^{-38.4}$ & 23 & $2^{-39.6}$ \\
     38 & $2^{-36.8}$ & 24 & $2^{-37.2}$ \\
     39 & $2^{-35.3}$ & 25 & $2^{-35.1}$ \\
     40 & $2^{-33.9}$ &    &           \\
     41 & $2^{-32.5}$ & 26 & $2^{-33.1}$ \\
     42 & $2^{-31.3}$ & 27 & $2^{-31.3}$ \\
     43 & $2^{-30.1}$ & 28 & $2^{-29.6}$ \\
     44 & $2^{-28.9}$ &    &           \\
     45 & $2^{-27.9}$ & 29 & $2^{-28.1}$ \\
     46 & $2^{-26.9}$ &    &           \\
     47 & $2^{-25.9}$ & 30 & $2^{-26.6}$ \\
     48 & $2^{-25.0}$ & 31 & $2^{-25.3}$ \\
     49 & $2^{-24.1}$ & 32 & $2^{-24.1}$ \\
     50 & $2^{-23.3}$ & 33 & $2^{-23.0}$ \\
     51 & $2^{-22.5}$ &    &           \\
     52 & $2^{-21.8}$ & 34 & $2^{-21.9}$ \\
     53 & $2^{-21.1}$ & 35 & $2^{-21.0}$ \\
     54 & $2^{-20.4}$ & 36 & $2^{-20.1}$ \\
     55 & $2^{-19.7}$ &    &           \\
     56 & $2^{-19.1}$ & 37 & $2^{-19.2}$ \\
     57 & $2^{-18.5}$ &    &           \\
     58 & $2^{-18.0}$ & 38 & $2^{-18.4}$ \\
     59 & $2^{-17.4}$ & 39 & $2^{-17.6}$ \\
     60 & $2^{-16.9}$ & 40 & $2^{-16.9}$ \\
     61 & $2^{-16.4}$ & 41 & $2^{-16.3}$ \\
     62 & $2^{-15.9}$ & 42 & $2^{-15.6}$ \\
    \hline  
  \end{tabular}
}
\subfloat[{For $\GF(3^{5\cdot 479})$}]
{
  \label{tab:smoothness-proba-GF-3-5-479}
  \begin{tabular}{|c|c|c|}
    \hline & Waterloo alg. & Algorithm~\ref{alg:boot in small char} \\
    \hline $b$ & $\Pr_{3^5}^2(239,b)$ & $\Pr_{3^5}(383,b)$ \\
    \hline
24 & $2^{-67.96}$ & $2^{-67.59}$ \\
25 & $2^{-63.95}$ & $2^{-63.86}$ \\
26 & $2^{-60.30}$ & $2^{-60.45}$ \\
27 & $2^{-56.95}$ & $2^{-57.32}$ \\
28 & $2^{-53.89}$ & $2^{-54.44}$ \\
29 & $2^{-51.07}$ & $2^{-51.79}$ \\
30 & $2^{-48.46}$ & $2^{-49.34}$ \\
31 & $2^{-46.06}$ & $2^{-47.06}$ \\
32 & $2^{-43.83}$ & $2^{-44.95}$ \\
33 & $2^{-41.76}$ & $2^{-42.99}$ \\
34 & $2^{-39.83}$ & $2^{-41.16}$ \\
35 & $2^{-38.03}$ & $2^{-39.44}$ \\
36 & $2^{-36.35}$ & $2^{-37.84}$ \\
37 & $2^{-34.77}$ & $2^{-36.34}$ \\
38 & $2^{-33.30}$ & $2^{-34.92}$ \\
39 & $2^{-31.91}$ & $2^{-33.60}$ \\
40 & $2^{-30.61}$ & $2^{-32.34}$ \\
41 & $2^{-29.39}$ & $2^{-31.16}$ \\
42 & $2^{-28.23}$ & $2^{-30.05}$ \\
43 & $2^{-27.14}$ & $2^{-28.99}$ \\
44 & $2^{-26.11}$ & $2^{-27.99}$ \\
45 & $2^{-25.13}$ & $2^{-27.04}$ \\
46 & $2^{-24.21}$ & $2^{-26.14}$ \\
47 & $2^{-23.33}$ & $2^{-25.29}$ \\
48 & $2^{-22.50}$ & $2^{-24.47}$ \\
49 & $2^{-21.71}$ & $2^{-23.70}$ \\
50 & $2^{-20.96}$ & $2^{-22.96}$ \\
\hline
  \end{tabular}
}
\end{table}

\subsection{Computing discrete logarithms in $\F_{2^{512}}$ and $\F_{2^{1024}}$}

In \cite[\S 3.6]{EC:GJMN16} discrete logarithms in 
$\F_{2^{512}}$ and $\F_{2^{1024}}$ need to be computed modulo the full
multiplicative group order $2^n-1$. 
As pointed to us by R.~Granger, our technique can be used  to compute
discrete logarithms in $\F_{2^{1024}}$. Our algorithm provides a
decomposition of the target as the product $u R$ where $u$ is an element in the
largest proper subfield $\F_{2^{512}}$, and
$P$ is an element of $\F_{2^{1024}}$ of degree 512 instead of 1023. 
The discrete logarithm of the
subfield cofactor $u$ can be obtained by a discrete logarithm
computation in $\F_{2^{512}}$. 
More generally, our technique is useful when discrete logarithms in
nested finite fields such as $\F_{2^{2^i}}$ are computed recursively.

\subsection{Improving the record computation in GF$({3^{5\cdot 479}})$}
\label{ex:JouxPierrot-GF-3-5-479}

Joux and Pierrot announced a discrete logarithm record computation in
GF$(3^{5\cdot 479})$ in \cite{NMBRTHY:JouPie14} (then published in
\cite{AC:JouPie14}). They defined a first degree 5 extension $\F_{3^5} 
= \F_3[y]/(y^5 - y + 1)$ and then a degree 479 extension on top of $\F_{3^5}$. With our
notation, we have $p=3$, $n_1=5$, and $n_2=479$. 
The irreducible degree 479 polynomial $I(x)$ is a divisor of 
$x h_1(x^{q_1}) - h_0 (x^{q_1})$, where $q_1=p^{n_1}=3^5$, $h_0 = x^2 + y^{111} x$ and $h_1
= y x + 1$.
Given a target $T \in \F_{3^{5\cdot 479}}$, the Waterloo initial
splitting outputs two polynomials $u(x), v(x) \in \F_{3^5}[x]$ of
degree $\lfloor 478/2 \rfloor = 239$.
Our Algorithm~\ref{alg:lower-degree-poly}
outputs one polynomial of degree $\lfloor \frac{4}{5} 479
\rfloor = 383$. This example is interesting because the smoothness
probabilities are very close. We computed the exact values with
Drmota--Panario's formulas, and give them in
Table~\ref{tab:smoothness-proba-GF-3-5-479},
p.~\pageref{tab:smoothness-proba-GF-3-5-479}.
We obtain $\Pr_{{3^5}}^2(239,50) = 2^{-20.96}$ (Waterloo) and $\Pr_{{3^5}}(383,50) =
2^{-22.96}$, i.e., our
Algorithm~\ref{alg:boot in small char} would be four times slower
compared to Joux's and Pierrot's record; 
 $\Pr_{{3^5}}^2(239,40) = 2^{-30.61}$ and $\Pr_{{3^5}}(383,40) =
2^{-32.34}$; 
  $\Pr_{{3^5}}^2(239,30) = 2^{-48.46}$ and $\Pr_{{3^5}}(383,30) = 2^{-49.34}$~; and the
  cross-over point is for $b=24$: in this case, we have 
 $\Pr_{{3^5}}^2(239,24) = 2^{-67.96}$ and $\Pr_{{3^5}}(383,24) = 2^{-67.59}$, which is
 slightly larger. 
 
The probabilities would advise using the classical initial splitting
with the Waterloo (extended GCD) algorithm. We remark that this algorithm
would output two $B_1$-smooth polynomials of degree $(n_2-1)/2$.
Each would factor into at least $(n_2-1)/(2B_1)$ irreducible polynomials of
degree at most $B_1$. Each such factor is sent as an input to the
second step (descent step), that is, roughly $n_2B_1$ factors. 
If we use Algorithm~\ref{alg:boot in small char}, the initial splitting will
outputs one polynomial of degree $4/5n_2= 383$ that factors into at least
$4/5 n_2/B_1$ polynomials of
degree at most $B_1$, each of them sent as input to the second step,
that is, the descent step is called 20\% time less, 
and that would reduce the total width of the descent tree
in the same proportion. Since the descent is the most costly part, and in
particular, the memory size required is huge, this remark would need
to be taken into consideration for a practical implementation.

As a proof of concept of our algorithm, we implemented in Magma our
algorithm, took the same parameters, generator, and target as in
\cite{NMBRTHY:JouPie14}, and found a 50-smooth decomposition for the
target given by the 471-th row of the matrix computed for $g^{23940}
T_0$ in 1239 core-hours (22.12 hours over 56 cores) on an Intel Xeon E5-2609 at 2.40GHz
 (compared to 5000 core-hours announced in
 \cite{NMBRTHY:JouPie14}).

 The value can be found at
\url{https://members.loria.fr/AGuillevic/files/F3_5_479_50smooth.mag.txt}.
In our technique, we compute $g^t T_0 = u v R$ where $u \in \F_{3^5}$
(this is the leading term of the polynomial), $v \in \F_{3^{479}}$, and
$R$ is 50-smooth. The discrete logarithm of $u$ can be
tabulated, however it remains quite hard to compute the discrete
logarithm of $v$. Our technique is useful if it is  easy (or not
required) to compute discrete logarithms in the subfields.

\section{Preliminaries before medium and large characteristic cases}
\label{sec:preliminaries}

In the first part of this paper, we were considering polynomials, and
we wanted polynomials of smallest possible degree. Now we turn to the
medium and large characteristic cases, where we do not have
polynomials but ideals in number fields, and we want ideals of small
norm. It requires testing whether large integers (norms) are smooth as fast as possible.
We recall the results of Pomerance and Barbulescu on the
\emph{early abort strategy}.
\subsection{Pomerance's Early Abort Strategy}
\label{sec:pomerance}

Pomerance in \cite{Pomerance82} introduced the
\emph{Early Abort Strategy} (EAS) 
to speed up the factorization of large integers, within Dixon's
algorithm, the Morrison--Brillhart (continued fraction) algorithm,
and the Schroeppel (linear sieve) and quadratic sieve, with several variations
in the factorization sub-routine (trial-division, Pollard--Strassen method). 
The Early Abort Strategy 
provides an asymptotic improvement in the
expected running time. Two versions are studied in \cite{Pomerance82}:
one early-abort test, then many tests.
In the relation collection step of the NFS algorithm, the partial
factorization of the pseudonorms
 is done with ECM in time $L_Q[1/6]$ ($Q=p^n$),
which is negligible compared to the total cost in $L_Q[1/3]$. So
Pomerance's EAS does not provide an asymptotic
speed-up, but a practical one.
However, in the individual discrete logarithm computation, the
initial splitting requires to find smooth integers (pseudonorms)
of larger size: $L_Q[1]$. This time the ECM cost is not
negligible, and Pomerance's EAS matters. The speed-up was analyzed by
Barbulescu in \cite{PhD:Barbulescu13}.

\begin{remark}
  Instead of the ECM test, it could be possible to use the hyperelliptic curve method
  test of H.~Lenstra, Pila and Pomerance~\cite{LenPilPom93,LenPilPom02}.
  This was investigated for instance by Cosset \cite[Chapter~4]{PhD:Cosset11}.
\end{remark}

Pomerance's analysis is presented in the
general framework of testing integers for smoothness. This is named
\emph{smoothing problem} in \cite[Chapter~4]{PhD:Barbulescu13}. 
In the individual discrete logarithm context, the numbers we want to
test for smoothness are not integers in an interval,
but pseudonorms, and their chances of being smooth do not exactly match
the chances of random integers of the same size.
However, we will make the usual heuristic assumption that for our
asymptotic computations, the pseudonorms considered behave as
integers of the same size.
We give Pomerance's Early Abort Strategy with one test in
Algorithm~\ref{alg:EAS} and with $k$ tests in Algorithm~\ref{alg:k-EAS}.

\begin{algorithm}
  \DontPrintSemicolon
   \caption{Pomerance's Early Abort Strategy (EAS)}
   \label{alg:EAS}
   \KwIn{Integer $m$, smoothness
     bound $B_1$, real numbers $\theta, b \in ]0, 1[$
}
   \KwOut{$B_1$-smooth decomposition of $m$, or $\bot$}
   $(m_0,m_1) \gets$ \texttt{ECM} $(m, B_1^\theta)$ 
   \tcp*[r]{cost: $L_{B_1^\theta}[1/2,\sqrt{2}]$}
   \tcp*[r]{$m_0$ is a $B_1^\theta$-smooth part of $m$}
   \tcp*[r]{$m_1$ is the non-factorized part of $m$}
   \If{$m_1 \leq m^{1-b}$}{
     $(m_2,m_3) \gets$ \texttt{ECM}$(m_1, B_1)$ \tcp*[r]{cost: $L_{B_1}[1/2,\sqrt{2}]$}
     \If{$m_3 = 1$}{
       \Return{$B_1$-smooth decomposition $m_1,m_2$ of $m$}
     }
   }
   \Return $\bot$  
\end{algorithm}

\begin{algorithm}[hbtp]
  \DontPrintSemicolon
   \caption{Pomerance's Early Abort Strategy with $k$ tests ($k$-EAS)}
   \label{alg:k-EAS}
   \KwIn{Integer $m$,
     smoothness bound $B_1$, number of tests $k \geq 0$, \\
     array of positive real numbers 
     $\mathbf{b} = [b_0, b_1, \ldots, b_k]$ where $0 <b_i \leq 1$, and
     $\sum_{i=0}^k b_i = 1$ \\
     array of positive real numbers 
     $\boldsymbol{\theta} =[\theta_0, \ldots, \theta_{k} = 1]$ where
     $\theta_i < \theta_{i+1}$ \\}
   \KwOut{$B_1$-smooth decomposition of $m$, or $\bot$}
     $m_i \gets m$\;
     $i \gets 0$ \; 
     $S \gets \emptyset $ \;
     \Repeat{$(i > k)$ \texttt{OR} $(m_i = 1)$ \texttt{OR} $(m_i > m^{1-\sum_{j=0}^{i-1} b_j})$}{
       $(s_{i},m_{i+1}) \gets$ \texttt{ECM} $(m_i, B_1^{\theta_i})$ 
       \tcp*[r]{cost: $L_{B_1^{\theta_i}}[1/2,\sqrt{2}]$}
       \tcp*[r]{$s_{i}$ is a $B_1^{\theta_i}$-smooth part of $m_i$,
         $m_{i+1}$ is not factorized}
       $S \gets S \cup s_i$ \;
       $m_i \gets m_{i+1}$ \;
       $i \gets i+1$ \;
     }
  \If{$m_i == 1$}{
    \Return{$B_1$-smooth decomposition $S$ of $m$}
  }
  \Return $\bot$
 \end{algorithm}
 
Writing the complexities as in Pomerance's paper, in terms of $k$
early-abort tests, one obtains Theorems~\ref{th:EAS-cpx} and~\ref{th:k-EAS-cpx}.

\begin{theorem}[{\cite[\S~4.3]{PhD:Barbulescu13}}]\label{th:EAS-cpx}
     The expected running time of the smoothing problem of an
   integer $N$ with Pomerance's EAS and the ECM smoothness test 
   is $L_{N}[1/3, c]$ where $c = (23/3)^{2/3}/3$,
   the smoothness bound is $B = L_N[2/3, \gamma]$,
   where $\gamma = 1/c$,
   $\theta = 4/9$, and $b=8/23$. 
\end{theorem}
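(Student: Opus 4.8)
The plan is to analyze the cost of Algorithm~\ref{alg:EAS} as a function of the parameters $\theta$ and $b$, and then optimize. Recall that to show a number $N$ of size $L_N[1,\kappa]$ (the relevant size in the initial splitting step, cf.\ the discussion before this subsection) is $B$-smooth with $B = L_N[2/3,\gamma]$, the naive strategy runs ECM on $N$ directly at cost $L_B[1/2,\sqrt2] = L_N[1/3, \sqrt2\sqrt{2/3}\,\gamma^{1/2}]$ per trial, and the smoothness probability of $N$ is $L_N[1/3, -1/(3\gamma)]$; balancing gives the usual $L_N[1/3]$ cost. Pomerance's idea is that most non-smooth $N$ can be discarded cheaply: the first ECM call with the much smaller bound $B_1^\theta = L_N[2/3,\theta\gamma]$ costs only $L_N[1/3,\sqrt2\sqrt{2/3}(\theta\gamma)^{1/2}]$, and we keep $N$ only if its unfactored part $m_1$ is $\leq N^{1-b}$, i.e.\ if the $B_1^\theta$-smooth part already accounts for a proportion $b$ of the digits of $N$.

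First I would write down the three quantities that govern the expected running time: (i) $C_1 = L_N[1/3, \sqrt2\sqrt{2/3}(\theta\gamma)^{1/2}]$, the cost of the cheap ECM pass, paid on every trial; (ii) $p_1$, the probability that a random $N$ of size $L_N[1,\kappa]$ has its $B_1^\theta$-smooth part at least $N^b$ — this is the probability of passing the early-abort filter — and (iii) $C_2 = L_N[1/3,\sqrt2\sqrt{2/3}\,\gamma^{1/2}]$, the cost of the second, full ECM pass, paid only on the surviving fraction $p_1$ of trials, after which the overall $B_1$-smoothness probability must also be accounted for. The expected cost of one trial is $C_1 + p_1 C_2$, and the expected number of trials to find a genuine $B_1$-smooth $N$ is the reciprocal of the full smoothness probability $p_2 = \Prb(N\ \text{is}\ B_1\text{-smooth})^{} = L_N[1/3,-\kappa/(3\gamma)]$ after renormalizing $\kappa$ away (one may set $\kappa=1$ by rescaling). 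Multiplying, the total is $\bigl(C_1 + p_1 C_2\bigr)/p_2$.

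The technical heart is estimating $p_1$. One uses the standard semismooth-integer density estimates (Dickman--de~Bruijn-type, as in Pomerance~\cite{Pomerance82}): the probability that $N$, of size $L_N[1,1]$, has a $L_N[2/3,\theta\gamma]$-smooth divisor of size $\geq N^b$ is itself of the form $L_N[1/3, -\mu(\theta,b,\gamma)]$ for an explicit $\mu$ obtained by optimizing the split of the $b$ fraction of digits into smooth pieces each bounded by $B_1^\theta$; concretely this contributes a term behaving like $L_N[1/3,-b/(3\theta\gamma)]$ up to lower-order adjustments. Substituting everything into $(C_1+p_1C_2)/p_2$, one gets an exponent in the $L_N[1/3,\cdot]$ notation that is a function $F(\gamma,\theta,b)$; the two ECM terms want $\gamma$ and $\theta\gamma$ small, the filter term $p_1$ wants $b$ large relative to $\theta$, and $p_2$ wants $\gamma$ large, so there is a genuine interior minimum. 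Setting the partial derivatives of $F$ to zero — this is where $\min(\text{cost of pass 1 over all trials},\ \text{cost of pass 2 over surviving trials})$ must be balanced against each other and against the number of trials — yields $\theta = 4/9$, $b = 8/23$, $\gamma = 1/c$ with $c = (23/3)^{2/3}/3$, reproducing the $L_N[1/3, (23/3)^{2/3}/3]$ claimed; I would quote Barbulescu~\cite[\S4.3]{PhD:Barbulescu13} for the verification of this optimization. The main obstacle is not the optimization itself but getting the semismooth density estimate for $p_1$ correct and uniform in the relevant ranges — i.e.\ justifying that the proportion of $N$ of size $L_N[1,1]$ whose $L_N[2/3]$-smooth part exceeds $N^b$ really is $L_N[1/3,\cdot]$ with the right constant — which rests on the classical analytic results on the distribution of smooth and semismooth numbers and on the heuristic (flagged already in the text) that pseudonorms behave like random integers of the same size.
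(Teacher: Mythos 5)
The paper does not contain its own proof of this statement: it is quoted directly from Barbulescu's thesis \cite[\S~4.3]{PhD:Barbulescu13}, so there is no internal proof to compare against. Evaluating your sketch on its own terms, however, there is a genuine gap in the cost formula that would prevent the optimization from yielding the stated parameters.

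You write the expected total cost as $(C_1 + p_1 C_2)/p_2$ with $p_2 = \Prb[N \text{ is } B_1\text{-smooth}]$. The error is in the denominator: the early-abort filter is \emph{lossy}. A $B_1$-smooth integer whose $B_1^\theta$-smooth part happens to be smaller than $N^b$ gets discarded at step~3 of Algorithm~\ref{alg:EAS} even though it would have passed the full ECM test. The correct denominator is therefore the \emph{joint} probability $p_{\mathrm{succ}} = \Prb[\text{$N$ passes the filter and is } B_1\text{-smooth}]$, which heuristically factors as
\[
p_{\mathrm{succ}} \approx \Prb[N^{b}\text{-part is } B_1^{\theta}\text{-smooth}]\cdot\Prb[N^{1-b}\text{-part is } B_1\text{-smooth}] = L_N\!\left[\tfrac13,\; -\tfrac{b}{3\theta\gamma} - \tfrac{1-b}{3\gamma}\right],
\]
whereas your $p_2 = L_N[1/3, -1/(3\gamma)]$. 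The gap between them is a nontrivial factor $L_N[1/3, -\tfrac{b}{3\gamma}(\tfrac1\theta - 1)]$, which is not absorbed by the $o(1)$ terms.

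This is not a cosmetic issue: it breaks the optimization in exactly the variable $b$. In your expression $(C_1 + p_1 C_2)/p_2$, the term $C_1/p_2$ is independent of $b$, and $p_1 C_2/p_2$ only decreases as $b$ grows, so there is no interior optimum in $b$ --- contrary to what you assert. The balancing force that pins $b$ down comes precisely from the $b$-dependence of $p_{\mathrm{succ}}$: raising $b$ makes the cheap filter reject more trials (both bad and good), so the number of trials $1/p_{\mathrm{succ}}$ goes up, while the amortized second-pass cost $p_1 C_2/p_{\mathrm{succ}} = C_2/\Prb[N^{1-b}\text{-part is }B_1\text{-smooth}]$ goes down. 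Equating the two surviving exponent expressions $2\sqrt{\theta\gamma/3} + \tfrac{b}{3\theta\gamma} + \tfrac{1-b}{3\gamma}$ and $2\sqrt{\gamma/3} + \tfrac{1-b}{3\gamma}$ and minimizing over $(\gamma,\theta,b)$ is what yields $\theta = 4/9$, $b = 8/23$, $\gamma = 1/c$ with $c = (23/3)^{2/3}/3$; plugging the claimed values into your formula instead gives an exponent near $1.11$ rather than $c \approx 1.29$. The rest of your sketch (ECM cost $L_{B}[1/2,\sqrt 2]$ translated into $L_N[1/3,\cdot]$, the estimate $p_1 \approx L_N[1/3, -b/(3\theta\gamma)]$, and deferring the analytic justification of the semismooth density to Pomerance) is sound.
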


\begin{theorem}[{\cite[\S~4.5 Th.~4.5.1]{PhD:Barbulescu13}}]\label{th:k-EAS-cpx}
The expected running time of the smoothing problem of an integer
$N$ with $k$ tests of Pomerance's EAS and the ECM smoothness test 
is $L_N[1/3, c]$ where
$$ c = 3^{1/3} ((15+4(2/3)^{3k})/19)^{2/3}~,
$$ 
the smoothness bound is $B = L_N[2/3, \gamma]$, where
$$  \gamma = 1/c ~,$$
the bound $b_i$ for $0 \leq i \leq k-1$ on the remaining part $m_i$
in Algorithm~\ref{alg:k-EAS} is 
$$ b_i = (2/3)^{3(k-i)} 19 / (15 + 4 (2/3)^{3k}) ~, $$
and the exponent $\theta_i$ for $0 \leq i \leq k$ is
$$ \theta_{i} = (4/9)^{k-i}~.$$
\end{theorem}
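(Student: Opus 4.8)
The plan is to recover Theorem~\ref{th:k-EAS-cpx} as the solution of a parameter optimisation, following Pomerance's original Early Abort analysis and its reworking by Barbulescu. Write the (still undetermined) smoothness bound as $B=L_N[2/3,\gamma]$ and treat the ECM levels $0<\theta_0<\theta_1<\cdots<\theta_k=1$ and the abort thresholds $b_0,\dots,b_{k-1}>0$ (with $b_k:=1-\sum_{i<k}b_i$, so $\sum_{i=0}^k b_i=1$) as free parameters. The aim is to express the expected running time of Algorithm~\ref{alg:k-EAS} as $L_N[1/3,c]$ for an explicit $c=c(\gamma,\{\theta_i\},\{b_i\})$ and then to minimise $c$; the minimiser will be the data announced in the theorem.

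First I would put the two elementary ingredients into $L_N[1/3,\cdot]$-form. (i) Since $B^{\theta_i}=L_N[2/3,\gamma\theta_i]$, the subexponential identity $L_{L_N[2/3,a]}[1/2,\sqrt2]=L_N[1/3,2\sqrt{a/3}]$ rewrites the cost of one ECM call at level $i$ as $L_N[1/3,\,2\sqrt{\gamma\theta_i/3}\,]$; the costliest (level~$k$) call is $L_N[1/3,2\sqrt{\gamma/3}]$, and as there are only $k+1$ levels and an integer of size $N$ has $O(\log N)$ prime factors, these polynomial-in-$\log N$ multiplicities vanish into the $o(1)$. (ii) For the smoothness probabilities I would invoke the classical Dickman/de~Bruijn-type density estimates recalled in \cite{Pomerance82,PhD:Barbulescu13}, under the standing heuristic of \S\ref{sec:pomerance} that pseudonorms behave like random integers of the same size: for a profile $\alpha_0,\dots,\alpha_k\ge0$ with $\sum_j\alpha_j\le1$, the probability that a random integer of size $N$ has the part built from primes in $(B^{\theta_{j-1}},B^{\theta_j}]$ account for exactly an $\alpha_j$-fraction of $\log N$, for every $j$, equals $L_N[1/3,\,-\sum_{j}\alpha_j/(3\gamma\theta_j)\,]$ — indeed that chunk is $B^{\theta_j}$-smooth of size $N^{\alpha_j}$, so its $u$-value is $\sim(\alpha_j/(\gamma\theta_j))(\log N)^{1/3}(\log\log N)^{-1/3}$ and its Dickman penalty is $\alpha_j/(3\gamma\theta_j)$, while any leftover mass may be absorbed into a single huge prime at cost $L_N[1/3,0]$.

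With these in hand the assembly is bookkeeping. To produce one $B$-smooth integer one processes $M=1/\Pr[\text{success}]$ candidates; the number reaching level $i$ is $M\cdot\Pr[\text{abort tests }0,\dots,i-1\text{ passed}]$, and the total cost is $\sum_{i=0}^k(\text{candidates reaching level }i)\times(\text{cost of level-}i\text{ ECM})$. An elementary linear-programming argument shows that each of these probabilities is dominated by the profile sitting exactly on the thresholds, $\alpha_j=b_j$: passing tests $0,\dots,i-1$ is cheapest by placing mass $b_j\log N$ in window $j$ for $j<i$ and the rest in one huge prime, so $\Pr[\text{pass }0,\dots,i-1]=L_N[1/3,-\sum_{j<i}b_j/(3\gamma\theta_j)]$, while full success forces $\alpha_j=b_j$ for all $j$ and $\Pr[\text{success}]=L_N[1/3,-\sum_{j=0}^k b_j/(3\gamma\theta_j)]$. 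Hence the total time is $\sum_{i=0}^k L_N[1/3,\ \sum_{j=i}^k b_j/(3\gamma\theta_j)+2\sqrt{\gamma\theta_i/3}\,]$, that is $L_N[1/3,c]$ with
\[
  c=\max_{0\le i\le k}\Big(\sum_{j=i}^k \frac{b_j}{3\gamma\theta_j}+2\sqrt{\tfrac{\gamma\theta_i}{3}}\Big).
\]
Minimising $c$ subject to $\sum_i b_i=1$ and $\theta_k=1$: at the optimum all $k+1$ terms coincide, and the consecutive-term equalities $\frac{b_i}{3\gamma\theta_i}=2\sqrt{\gamma/3}\,(\sqrt{\theta_{i+1}}-\sqrt{\theta_i})$ together with the stationarity conditions in the $\theta_i$'s force the geometric laws $\theta_i=(4/9)^{k-i}$ and $b_i=b_{k-1}(27/8)^{\,k-1-i}$ for $i\le k-1$ — the ratios $4/9=(2/3)^2$ and $8/27=(2/3)^3$ being exactly what balances the ECM factor $\sqrt{\theta}$ against the penalty $1/\theta$. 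Imposing $\sum_i b_i=1$ then fixes the constant, giving $b_i=(2/3)^{3(k-i)}\,19/(15+4(2/3)^{3k})$ and, back-substituting, $c=3^{1/3}\big((15+4(2/3)^{3k})/19\big)^{2/3}$ with $\gamma=1/c$. As sanity checks, $k=0$ collapses to the plain no-abort bound $c=3^{1/3}$, and $k=1$ reproduces Theorem~\ref{th:EAS-cpx}. An equivalent and perhaps cleaner route is a recursion: one level-$0$ ECM, followed (if it passes) by a suitably rescaled $(k-1)$-fold instance on the cofactor, which directly produces the geometric solution.

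\textbf{The main obstacle} is step (ii): making the multi-level ``smoothness profile'' density rigorous. One must justify that the joint event of a prescribed prime-mass in each window $(B^{\theta_{j-1}},B^{\theta_j}]$ has leading exponent $\sum_j\alpha_j/(3\gamma\theta_j)$ — a several-variable generalisation of Dickman's $\rho$, needing the de~Bruijn-style semismooth estimates and the heuristic that pseudonorms obey them — and, just as importantly, that among all admissible profiles the \emph{dominant} contribution to each probability is the one saturating the abort thresholds ($\alpha_j=b_j$), so that the optimisation genuinely reduces to the finite-dimensional $\max$ above. A secondary, purely algebraic point is to check that the balanced system has precisely the stated closed form and that the critical point is a true minimum of $c$.
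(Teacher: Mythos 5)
This theorem is cited from Barbulescu's thesis and is not proved in the paper, so there is no proof of record to compare with; I can only assess your reconstruction on its own terms. Your framework is the natural one: you correctly set up the cost as $L_N[1/3,\max_{0\le i\le k}T_i]$ with $T_i=\sum_{j\ge i}b_j/(3\gamma\theta_j)+2\sqrt{\gamma\theta_i/3}$, correctly argue that the dominant smoothness profile saturates the abort thresholds so that passing tests $0,\dots,i-1$ costs $L_N[1/3,-\sum_{j<i}b_j/(3\gamma\theta_j)]$, and your $k=0$ and $k=1$ sanity checks pass.

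However, the step where you assert that ``the consecutive-term equalities \dots\ together with the stationarity conditions in the $\theta_i$'s force the geometric laws $\theta_i=(4/9)^{k-i}$'' is not correct once $k\ge 2$, and this is the crux of the derivation. Eliminating the $b_i$'s via the consecutive-term equalities and writing $s_i=\sqrt{\theta_i}$ (with $s_k=1$) gives $c=\tfrac{1}{3\gamma}+2\sqrt{\gamma/3}\,(1-S)$ with $S=\sum_{i=0}^{k-1}s_i^2(s_{i+1}-s_i)$, hence $c=3^{1/3}(1-S)^{2/3}$ after optimizing $\gamma$; minimizing $c$ therefore means maximizing $S$. But $\partial S/\partial s_j=2s_js_{j+1}-3s_j^2+s_{j-1}^2$ evaluated at $s_j=(2/3)^{k-j}$ equals $(4/9)(2/3)^{2(k-j)}\neq 0$ for $1\le j\le k-1$, so the claimed $\theta$'s are not stationary. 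Concretely, for $k=2$ the stationary point is $s_0=12/23$, $s_1=18/23$, yielding $S=108/529$ and $c=3^{1/3}(421/529)^{2/3}\approx 1.2386$, strictly below the theorem's $c=3^{1/3}(589/729)^{2/3}\approx 1.2511$. So under your cost model the stated parameters are achievable but not minimal; either the citation asserts achievability with a convenient closed-form parameterization, or Barbulescu's cost accounting differs from yours in a way you would need to track down. To close the gap you should either exhibit the actual derivation that yields $\theta_i=(4/9)^{k-i}$ (perhaps the recursion you gesture at, made explicit) or downgrade the optimization to a verification: check $T_0=\cdots=T_k$ and $\sum_i b_i=1$ for the stated parameters, which does hold and suffices for the stated running time. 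A minor slip on the way: the unnormalized geometric law should be $b_i=b_{k-1}(8/27)^{k-1-i}$, not $(27/8)^{k-1-i}$ (the $b_i$ increase with $i$), though your final closed form $b_i=(2/3)^{3(k-i)}\,19/(15+4(2/3)^{3k})$ is right. Your identification of the multi-window semismooth density and the ``threshold-saturating profile dominates'' argument as the main rigor obstacles is apt.
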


In Section~\ref{subsec:running-time-initial-splitting},
we will consider that pseudonorms behave in terms of smoothness like
integers bounded by $N^e$ (instead of $N$). We will need
the following lemmas.

\begin{lemma}[{\cite[\S 4.1]{PKC:ComSem06}, \cite[Lemma 1]{AC:Guillevic15}
    Running time of ${B}$-smooth decomposition of integers with ECM}]
\label{lemma: asympt cpx boot}
Let $N_i$ be integers taken uniformly at random and bounded by $N^e$,
for a fixed real number $e > 0$.
Write ${B} = L_N[\alpha_{B}, \gamma]$ the smoothness bound.
Then the expected running time to obtain a $B$-smooth $N_i$, using ECM
for $B$-smooth tests,
is $L_N[1/3, (3e)^{1/3}]$, obtained with
$B = L_N[2/3, e/c = (e^2/3)^{1/3}]$.
\end{lemma}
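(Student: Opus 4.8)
The statement to prove is Lemma~\ref{lemma: asympt cpx boot}: given integers $N_i \leq N^e$ chosen uniformly at random, the expected running time to obtain a $B$-smooth one using ECM for the smoothness tests is $L_N[1/3,(3e)^{1/3}]$, achieved at $B = L_N[2/3,(e^2/3)^{1/3}]$. I would prove this by the standard balancing argument for $L$-notation smoothness trade-offs. First I would recall the Canfield--Erd\H{o}s--Pomerance estimate: an integer bounded by $X$ is $B$-smooth with probability $u^{-u(1+o(1))}$ where $u = \log X/\log B$; applied with $X = N^e$ and $B = L_N[\beta,\gamma]$ this gives a smoothness probability of $L_N[1-\beta, -e(1-\beta)/\gamma]$ (one computes $u = (\log N^e)/(\log B)$, substitutes, and collects terms — a routine calculation I would state but not belabor). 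So the expected number of trials is the reciprocal, $L_N[1-\beta, e(1-\beta)/\gamma]$.

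Next I would account for the cost of each trial. The ECM smoothness test on an integer whose prime factors are bounded by $B = L_N[\beta,\gamma]$ runs in time $L_B[1/2,\sqrt 2] = L_N[\beta/2, o(1)]$ in the exponent sense — more precisely this is subexponential in $\log B$ and hence of the form $L_N[\beta/2,\cdot]$, which is negligible compared with the trial count as long as $\beta \leq 2/3$, i.e.\ $\beta/2 \leq 1/3 \leq 1-\beta$. (One also pays $\mathrm{polylog}$ factors for arithmetic on the $N_i$, absorbed into the $o(1)$.) So the total expected running time is dominated by $L_N[1-\beta, e(1-\beta)/\gamma]$ times a factor $L_N[\beta/2,\cdot]$; for this product to have first parameter $1/3$ one needs $1-\beta = 1/3$, that is $\beta = 2/3$, and then $\beta/2 = 1/3$ as well so the ECM cost is of the same order in the first parameter but still negligible in the second since its constant is $o(1)$ against a fixed positive constant. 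Hence set $\beta = 2/3$.

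With $\beta = 2/3$ fixed, the remaining freedom is $\gamma$. The expected running time is $L_N[1/3, e/(3\gamma)]$, and I would now simply record that this is the quantity to minimize; but the lemma fixes $B$ (it is a ``running time at a prescribed smoothness bound'' statement, not the optimized NFS bound), so in fact I would just substitute $\gamma = (e^2/3)^{1/3}$ as prescribed and compute $e/(3\gamma) = e/(3 (e^2/3)^{1/3}) = e \cdot 3^{-1} \cdot 3^{1/3} e^{-2/3} = e^{1/3} 3^{-2/3} = (e/9 \cdot 3)^{1/3}$; cleaning up, $e/(3\gamma) = (3e)^{1/3}$ exactly when $\gamma = (e^2/3)^{1/3}$. (This is the value that also minimizes $e/(3\gamma) + \gamma/2$ appearing in the full ECM-with-early-abort analysis, which is why it is the natural choice; I would remark on this to justify the particular $\gamma$.) Assembling the three pieces — smoothness probability from Canfield--Erd\H{o}s--Pomerance, ECM test cost, and the arithmetic for $\gamma = (e^2/3)^{1/3}$ — gives the claimed $L_N[1/3,(3e)^{1/3}]$.

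**Main obstacle.** None of the steps is deep; the only delicate point is the heuristic justification that the ECM cost genuinely stays negligible (or at worst matches in the first parameter without inflating the second) rather than dominating — one must be careful that ECM's running time depends on the \emph{largest prime factor actually encountered}, not on $B$, and invoke the standard heuristic that restricting to $B$-smooth inputs makes the expected ECM cost $L_B[1/2,\sqrt 2]$ per successful factorization. I would handle this exactly as in \cite{PKC:ComSem06} and \cite{AC:Guillevic15}, citing that the analysis is identical once $X = N$ is replaced by $X = N^e$, which only scales the exponent $u$ linearly and hence multiplies the second $L$-parameter of the trial count by $e$.
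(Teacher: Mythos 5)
There is a genuine error in your analysis: the claim that the ECM smoothness test is ``negligible in the second [$L$-parameter] since its constant is $o(1)$'' is false, and the arithmetic that ``cleans up'' to $(3e)^{1/3}$ is incorrect. Let me be concrete. With $B = L_N[2/3,\gamma]$, the cost of one ECM smoothness test is $L_B[1/2,\sqrt 2]$. Writing $\log B = \gamma(\log N)^{2/3}(\log\log N)^{1/3}$ and $\log\log B \sim \tfrac{2}{3}\log\log N$, one gets
\begin{equation*}
L_B[1/2,\sqrt 2] \;=\; \exp\Bigl(\sqrt 2\,(\log B)^{1/2}(\log\log B)^{1/2}\Bigr) \;=\; L_N\Bigl[\tfrac{1}{3},\; 2\sqrt{\gamma/3}\Bigr].
\end{equation*}
The constant $2\sqrt{\gamma/3}$ is a \emph{fixed positive} real, not $o(1)$. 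So the total expected running time is the product (trial count)$\times$(ECM cost per trial), that is,
\begin{equation*}
L_N\Bigl[\tfrac{1}{3},\; \frac{e}{3\gamma}\Bigr]\cdot L_N\Bigl[\tfrac{1}{3},\; 2\sqrt{\gamma/3}\Bigr] \;=\; L_N\Bigl[\tfrac{1}{3},\; \frac{e}{3\gamma} + 2\sqrt{\gamma/3}\Bigr],
\end{equation*}
and one must minimize the \emph{sum} over $\gamma$. Setting the derivative to zero gives $\gamma^{3/2} = e/\sqrt 3$, i.e.\ $\gamma = (e^2/3)^{1/3}$ as claimed, and at this $\gamma$ the two terms evaluate to $(e/9)^{1/3}$ and $2(e/9)^{1/3}$ respectively, summing to $3(e/9)^{1/3} = (3e)^{1/3}$. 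Your computation $e/(3\gamma) = (3e)^{1/3}$ is simply false: with $\gamma = (e^2/3)^{1/3}$ one has $3\gamma = (9e^2)^{1/3}$ and hence $e/(3\gamma) = (e/9)^{1/3}$, which is a factor $3$ short inside the cube root. That missing factor is exactly the ECM cost, which is in fact twice as large as the trial-count contribution. In other words, the per-test cost is the \emph{dominant} part of the running time, not a negligible one, and the choice $\gamma = (e^2/3)^{1/3}$ is precisely the value that balances the two. (The paper itself proves this lemma only by reference to the cited sources, so there is no internal proof to compare against, but the cited analyses do carry out exactly this balancing.)
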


\begin{lemma}[{\cite{Pomerance82,PhD:Barbulescu13} Running time
    of ${B}$-smooth decomposition of integers with ECM and $k$-EAS}]
\label{lemma:asympt-cpx-boot-ECM+EAS}
Let $N_i$ be integers taken uniformly at random and bounded by $N^e$,
for a fixed real number $e > 0$.
Write ${B} = L_N[\alpha_{B}, \gamma]$ for the smoothness bound.
Then the expected running time to obtain a $B$-smooth $N_i$, using ECM
 for $B$-smooth tests and Pomerance's Early Abort Strategy with one test,
is $L_N[1/3, c=(3e)^{1/3}(23/27)^{2/3}]$, obtained with
$B = L_N[2/3, e/c]$. 
The expected running time with $k$-EAS is
$L_N[1/3, c= (3e)^{1/3} ((15+4(2/3)^{3k})/19)^{2/3}]~$, with
$B = L_N[2/3, e/c]$. 
\end{lemma}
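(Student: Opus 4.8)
The plan is to obtain the statement from Theorems~\ref{th:EAS-cpx} and~\ref{th:k-EAS-cpx}, which are precisely the case $e=1$, by a change of variable in the $L$-notation; the only new ingredient is bookkeeping on how the constants rescale when the numbers to be smoothed have size $N^e$ rather than $N$. The first step is to record the scaling law of the subexponential function: for a fixed real $e>0$ and any $\alpha\in(0,1)$,
\[
L_{N^e}[\alpha,c]=L_N[\alpha,c\,e^{\alpha}],
\]
which is immediate from $\log(N^e)=e\log N$ and $\log\log(N^e)=\log\log N+\log e=(1+o(1))\log\log N$. In particular the exponent $\alpha$ is unchanged, a constant at level $1/3$ is multiplied by $e^{1/3}$, and a constant at level $2/3$ by $e^{2/3}$.

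Next, as already assumed in the paragraph preceding the lemma, the $N_i$ --- although pseudonorms rather than genuinely uniform integers --- are treated heuristically as random integers of their size, i.e.\ of size at most $N^e$. Producing a $B$-smooth $N_i$ with ECM-based smoothness tests, possibly preceded by one or $k$ early-abort phases as in Algorithms~\ref{alg:EAS} and~\ref{alg:k-EAS}, is then exactly the smoothing problem of \S\ref{sec:pomerance} for an integer of size $M:=N^e$ (recall that the cost of that problem is the per-candidate ECM cost divided by the smoothness probability, which is what Theorems~\ref{th:EAS-cpx} and~\ref{th:k-EAS-cpx} balance and optimize). Applying those theorems with $N$ replaced by $M$ gives expected running time $L_M[1/3,c_0]$ with optimal smoothness bound $L_M[2/3,1/c_0]$, where $c_0=(23/3)^{2/3}/3$ in the one-test case and $c_0=3^{1/3}\bigl((15+4(2/3)^{3k})/19\bigr)^{2/3}$ in the $k$-test case; the early-abort exponents $\theta$, $\theta_i$ and the truncation parameters $b$, $b_i$ of those theorems are dimensionless ratios and so are carried over unchanged.

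Finally I would translate back with the scaling law: $L_M[1/3,c_0]=L_N[1/3,c_0e^{1/3}]$ and $L_M[2/3,1/c_0]=L_N[2/3,e^{2/3}/c_0]$, and verify the elementary identities
\[
\tfrac{(23/3)^{2/3}}{3}\,e^{1/3}=(3e)^{1/3}\Bigl(\tfrac{23}{27}\Bigr)^{2/3},\qquad 3^{1/3}e^{1/3}=(3e)^{1/3},\qquad \tfrac{e^{2/3}}{c_0}=\tfrac{e}{c_0e^{1/3}},
\]
so that $c:=c_0e^{1/3}$ is the claimed running-time constant in each case, the optimal $B$ still has $\alpha_B=2/3$, and its constant is $\gamma=e/c$. (Taking $k=0$, so $c_0=3^{1/3}$ and the early-abort phase is vacuous, also recovers Lemma~\ref{lemma: asympt cpx boot}.) There is no real obstacle here beyond care with the exponent algebra; the one point deserving a word of justification is the identification of ``find a $B$-smooth $N_i$ among integers bounded by $N^e$'' with Pomerance's single-integer smoothing problem at size $N^e$, which rests on the heuristic about pseudonorms already adopted in the text together with the standard relation (expected search cost)$\,=\,$(test cost)$/$(smoothness probability).
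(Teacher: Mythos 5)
Your proposal is correct and does exactly what the paper intends: the paper gives no explicit proof of Lemma~\ref{lemma:asympt-cpx-boot-ECM+EAS}, instead citing Pomerance and Barbulescu (i.e.\ Theorems~\ref{th:EAS-cpx} and~\ref{th:k-EAS-cpx}), and the only content to supply is the rescaling $L_{N^e}[\alpha,c]=L_N[\alpha,ce^\alpha]$ and the bookkeeping that $c=c_0e^{1/3}$ and $\gamma=e^{2/3}/c_0=e/c$, both of which you verify correctly (including the nice sanity check that $k=0$ recovers Lemma~\ref{lemma: asympt cpx boot}).
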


We will mix Pomerance' strategy with our new initial splitting step to
improve its running time.

\subsection{LLL algorithm}
We recall an important property of the LLL algorithm~\cite{LLL82}
that we will widely use in this paper.
Given a lattice $\mathcal{L}$ of $\ZZ^n$ defined by a basis given by
an $n \times n$ matrix $L$,   
and parameters $\frac{1}{4} < \delta <1$, $\frac{1}{2} < \eta <
\sqrt{\delta}$, the LLL algorithm outputs a $(\eta,
\delta)$-\emph{reduced basis} of the lattice. 
the coefficients of the first (shortest) vector are bounded by
$$(\delta - \eta^2)^{\frac{n-1}{4}} \det(L)^{1/n} ~.$$
In the remainder of this paper, we will simply denote by $C$ this LLL
approximation factor.

\subsection{NFS and Tower variants}

\subsubsection{Settings}
There exist many polynomial selection methods to initialize the
NFS algorithm for large and medium characteristic finite fields. 
We give in 
Table~\ref{tab: deg size polyselect comparison} the properties
of the polynomials that we need (degree and coefficient size) to
deduce an upper bound of the pseudonorm, as in \eqref{eq:norm bound},
and \eqref{eq:ExTNFS-pseudonorm}.

\begin{figure}[h]
  \centering
  \subfloat[{NFS number fields}]
  {
    \label{subfig:NFS-tower}
    \begin{tikzpicture}[>=latex]
      \node (QQ) {$\QQ$};
      \node[above of=QQ, node distance=3em] (Kh) {};
      \node[above left of=Kh, node distance = 5em] (Kf) {$K_{f_0}$};
      \node[above right of=Kh, node distance = 5em] (Kg) {$K_{f_1}$};
      \draw[-] (QQ) -- node[left]{$\deg f_0 \geq n$} (Kf);
      \draw[-] (QQ) -- node[right]{$\deg f_1 \geq n$} (Kg);
    \end{tikzpicture}
  }
  \subfloat[{Tower-NFS number fields}]
  {
    \label{subfig:ExTNFS-tower}
    \begin{tikzpicture}[>=latex]
      \node (QQ) {$\QQ$};
      \node[above of=QQ, node distance=3em] (Kh) {$K_h$};
      \draw[-] (QQ) -- node[right]{$\deg h = n_1$} (Kh);
      \node[above left of=Kh, node distance = 5em] (Kf) {$K_{f_0}$};
      \node[above right of=Kh, node distance = 5em] (Kg) {$K_{f_1}$};
      \draw[-] (Kh) -- node[left]{$\deg f_0 \geq n_2$} (Kf);
      \draw[-] (Kh) -- node[right]{$\deg f_1 \geq n_2$} (Kg);
    \end{tikzpicture}
  }
  \caption{Extensions of number fields for NFS and tower variants}
  \label{fig:NFS-ExTNFS-towers}
\end{figure}
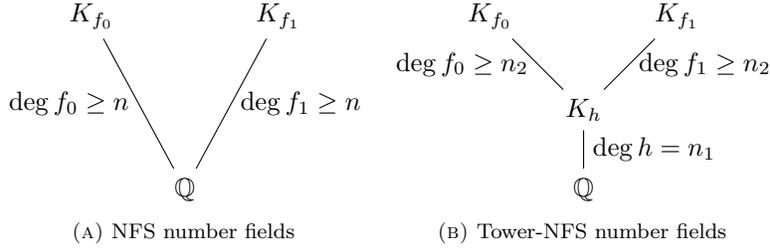

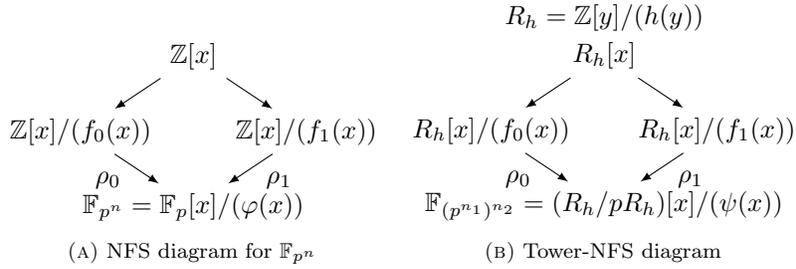
\begin{figure}[h]
  \centering
  \subfloat[{NFS diagram for $\FF_{p^n}$}]
  {
    \label{subfig:NFS-diag}
    \begin{tikzpicture}[>=latex]
    \node (Zx) at (1.5, 2) {$\ZZ[x]$};
    \node (Kf) at (0, 1) {$\ZZ[x]/(f_0(x))$};
    \node (Kg) at (3, 1) {$\ZZ[x]/(f_1(x))$};
    \node (Fpn) at(1.5, 0) {$\FF_{p^n} = \FF_p[x]/(\varphi(x))$};
    \draw[->] (Zx) -- (Kf);
    \draw[->] (Zx) -- (Kg);
    \draw[->] (Kf) -- node[yshift=-1ex,xshift=-0.25em,left]  {$\rho_0$} (Fpn);
    \draw[->] (Kg) -- node[yshift=-1ex,xshift=0.25em,right]  {$\rho_1$}(Fpn);
  \end{tikzpicture}
  }
  \subfloat[{Tower-NFS diagram}]
  {
    \label{subfig:ExTNFS-diag}
    \begin{tikzpicture}[>=latex]
      \node (defR) at (1.5, 2.5) {$R_h=\ZZ[y]/(h(y))$};     
    \node (Rx) at (1.5, 2) {$R_h[x]$};
    \node (Kf) at (0, 1) {$R_h[x]/(f_0(x))$};
    \node (Kg) at (3, 1) {$R_h[x]/(f_1(x))$};
    \node (Fpn) at(1.5, 0) {$\FF_{(p^{n_1})^{n_2}} = (R_h/pR_h)[x]/(\psi(x))$};
    \draw[->] (Rx) -- (Kf);
    \draw[->] (Rx) -- (Kg);
    \draw[->] (Kf) -- node[yshift=-1ex,xshift=-0.25em,left]  {$\rho_0$} (Fpn);
    \draw[->] (Kg) -- node[yshift=-1ex,xshift=0.25em,right]  {$\rho_1$}(Fpn);
  \end{tikzpicture}    
  }  
    \caption{\label{fig:diag}NFS and tower variant diagrams for $\FF_{p^n}$}
\end{figure}

Three polynomials define the NFS setting: $\psi, f_0, f_1$, where
$f_0, f_1$ are two polynomials of integer coefficients, irreducible
over $\QQ$, of degree $\geq n$, defining two non-isomorphic number
fields, and whose GCD modulo $p$ is an irreducible polynomial
$\psi$ of degree $n$, used to define the extension $\F_{p^n} =
\F_p[x]/(\psi(x))$. 

In a tower-NFS setting, one has $n = n_1n_2$, $n_1,n_2 \neq 1$ and four
polynomials are defined: $h,\psi,f_0,f_1$, where $\deg
h = n_1$ and $h$ is irreducible modulo $p$, $\deg \psi = n_2$ and $\psi$ is irreducible
modulo $p$, and $\gcd(f_0 \bmod (p, h), f_1 \bmod (p, h)) = \psi$.
It can be seen as a generalization of the NFS setting as follows: writing
$n=n_1n_2$, one starts by defining a field extension $\F_{p^{n_1}} =
\F_p[y]/(h(y))$ and then adapting any previously available polynomial
selection designed for NFS in $\GF(p^{n_2})$, using $\F_{p^{n_1}}$ as the base field
instead of $\F_{p}$. When $\gcd(n_1,n_2)>1$, the polynomials
$f_0,f_1$, resp., $\psi$, will have coefficients in $\QQ[y]/(h(y))$,
resp., $\F_{p^{n_1}}$, instead of $\QQ$, resp., $\F_p$.
Then one defines the second
extension $\F_{p^n_1}[x]/(\psi(x))$ of degree $n_2=\deg_x\psi$.

Again, to cover all the cases, we consider $\F_{p^n} =
\F_{(p^{n_1})^{n_2}}$. The NFS case will correspond to $n_1=1$,
$n_2=n$ and the original TNFS case to $n_1=n$, $n_2=1$.

\subsubsection{Pseudonorm and upper bound}
\label{subsec: norm bound}
Let $f$ be a monic irreducible polynomial over $\QQ$, and let $K =
\QQ[x]/(f(x))$ be a number field. Write $T \in K$ as a polynomial in $x$: $T
= \sum_{i=0}^{\deg f-1} a_i x^i$. 
The norm is defined by a resultant computation:
\begin{equation}
\Norm_{K/\QQ}(T) =  \Reslt(f, T) ~.
\end{equation}
In the NFS case, we will consider elements expressed as polynomials in $x$
whose coefficients are integers. We define the pseudonorm as the resultant of the element with
the given polynomial $f$:
$$ T= \sum_{i=0}^{\deg f-1}a_ix^i,~\mbox{pseudonorm}(T(x)) = \Reslt(T(x),f(x)) ~.$$
We use Kalkbrener's bound \cite[Corollary 2]{Kalkbrener97} for an upper bound:
\begin{equation}\label{eq:Kalkbrener-bound}
|\Reslt(f,T)|\leq \kappa(\deg f,\deg T) \norm{f}_\infty^{\deg T}
\norm{T}_\infty^{\deg f},
\end{equation}
where $\kappa(n,m)=\binom{n+m}{n}\binom{n+m-1}{n}$ and
$\norm{f}_\infty = \max_{0 \leq j \leq \deg f} |f_j|$ is the absolute
value of the largest coefficient.
An upper bound for $\kappa(n,m)$ is $(n+m)!$.
We will use the following bound in Section~\ref{sec:app-large-char}:
\begin{equation}\label{eq:norm bound}
\Norm_{K_f/\QQ}(T) \leq (\deg f + \deg T)! \norm{f}_\infty^{\deg T}
\norm{T}_\infty^{\deg f} ~.
\end{equation}
In a Tower-NFS case, we nest two resultants:
$$ T= \sum_{i=0}^{\deg f-1}\sum_{j=0}^{\deg h-1}a_{ij}y^jx^i,~\mbox{pseudonorm}(T(x,y)) = \Reslt_y(\Reslt_x(T(x),f(x)),h(y)) ~.$$
A bound is \cite[\S A Lemma 2]{C:KimBar16} 
\begin{equation}
  \label{eq:ExTNFS-pseudonorm}
  \begin{array}{l}
| N_{K_f/\QQ} \sum_{i=0}^{\deg_x\! P}\sum_{j=0}^{\deg
    h-1}a_{ij}\alpha_h^j\alpha_{f}^i | \\
    <
\|a_{ij}\|_{\infty}^{\deg h \deg f} \|f\|_{\infty}^{\deg_x\!P\deg h}
\|h\|_\infty^{(\deg_x\!P+\deg f)(\deg h-1)} D(\deg h,\deg f)~,
  \end{array}
  \end{equation}
where $\|a_{ij}\|_{\infty} = \max_{i,j} | a_{ij}|$ and $D(d_1,d_2)$ is a combinatorial term, $D(d_1,d_2) =
((2d_2-1)(d_1-1)+1)^{d_1/2}(d_1+1)^{(2d_2-1)(d_1-1)/2} ((2d_2-1)!d_1^{2d_2})^{d_1}$.

\section{Faster Initial Splitting with NFS and Tower variants for medium and large characteristic finite fields}
\label{sec:app-large-char}

We apply Algorithm~\ref{alg:lower-degree-poly} to the medium and large
characteristic cases.
For a general exposition, we assume that we are in a tower setting,
where $Q = p^n = (p^{n_1})^{n_2}$.
The elements of $\F_{p^n}$ are represented as $T = 
\sum_{i=0}^{n_1-1} \sum_{j=0}^{n_2-1} a_{i,j}y^j x^i$.
the NFS setting corresponds to $n_1=1$, $n_2=n$. When $n$ is prime,
the tower setting is $n_1=n$, $n_2=1$ but our algorithm does not apply.
Denote by $h(y)$ the polynomial defining the field $\F_{p^{n_1}}$ and by $\psi$
the polynomial defining the degree $n_2$ extension $\F_{(p^{n_1})^{n_2}}$.
Here we are not interested (only) in computing a preimage of degree 
as small as possible, but more generally one whose size of pseudonorm
is as small as possible.
According to the bounds \eqref{eq:norm bound}, \eqref{eq:ExTNFS-pseudonorm},
we need to combine small coefficients $a_{i,j}$ (to reduce the contribution
of $\norm{a_{ij}}_\infty^{\deg h \deg f}$) with a small degree in $x$ (to reduce the
contribution of $\norm{f}_\infty^{\deg_x P\deg h}$), and
balance the two terms to find a pseudonorm of smaller size.

\subsection{The algorithm}
We start again with the same idea as in Algorithm~\ref{alg:lower-degree-poly}:
let $d$ be the largest proper divisor of $n$, with $1 < d< n$.%
\footnote{$d=\deg(h)=n_1$ is
the case studied independently in the preprint \cite{EPRINT:ZZLL16}.
Since an earlier version of this work was presented at Asiacrypt 2015 and ECC
2015, and the question of how to use larger subfields raised in discussions
at these conferences, it is not surprising that other researchers though of similar ideas to
improve individual discrete logarithms in the same time period.}
 Assume we want to
obtain a preimage $P \in \F_{p^{n_1}}[x]$ of the target, of degree $(n-d)/n_1 \leq \deg P < \deg f$.
We will use relations of the form
\begin{equation*}
    P = u T \pmod{\psi}, \mbox{ where } u^{p^d-1}=1~~ \mbox{ as in } \eqref{eq:P-eq-uT}.
\end{equation*}
We use the relations
$$ x^iy^j p = 0 \pmod{p,h,\psi} \mbox{ for } 0 \leq ij < d~,$$
$$ P = U^i T  \pmod{p,h,\psi} ~,$$
where $\{1,U,\ldots,U^{d-1}\}$ is a polynomial basis of $\F_{p^d}$ and where
$$ x^iy^j \psi = 0 \pmod{p,h,\psi} \mbox{ for } 0\leq j < n_1, ~0 \leq i < \deg(P)-n_2.$$
We define the lattice of these relations and we obtain a matrix

$$L_{n_1(\deg P+1) \times n_1(\deg P+1)} =\begin{bmatrix}
p & \\
  & \ddots & \\
  &        & p & \\
  \multicolumn{4}{r}{\cff(T)} \\
  \multicolumn{4}{r}{\cff(UT)} \\
  &        & \vdots \\
  \multicolumn{4}{r}{\cff(U^{d-1} T)} \\
  & & \multicolumn{3}{r}{\cff(\psi)} \\
  & & \multicolumn{3}{r}{\ddots} \\
  \multicolumn{6}{r}{\cff(x^i(y^j \psi \bmod h(y)))}
\end{bmatrix}$$
We want to obtain a matrix in row-echelon form. The $d$ first rows and 
the $n_1(\deg P - n_2)$ last rows are in row-echelon form by construction.
We compute Gaussian elimination to obtain a reduced row-echelon form for the
rows $U^iT$. We use $\F_p$-linear 
combinations of these rows, and we allow divisions in $\F_p$ so that the leading
coefficient is one. We then obtain a square matrix of dimension $n_1
(\deg P+1)$ in row-echelon form. Now at this point we apply a lattice reduction
algorithm such as LLL or BKZ to reduce the size of the coefficients of $L$.
We obtain a matrix $R$ whose first row has coefficients bounded by 
$C_{\LLL}\det (L) ^{1/(n_1(\deg P+1))} = p^{(n-d)/(n_1(\deg P+1))}$.

\begin{algorithm}[htbp]
  \DontPrintSemicolon
  \caption{Initial splitting,
    Tower-NFS setting}
  \label{alg:boot-d-subfield-ExTNFS}
    \KwIn{Finite field $\F_{p^n}$, $n=n_1n_2$,
    monic irreducible polynomials $h,\psi$ s.t. $\F_{p^{n_1}} = \F_p[y]/(h(y))$, $\F_{(p^{n_1})^{n_2}} = \F_{p^{n_1}}[x]/(\psi(x))$, 
    prime order subgroup $\ell \mid \Phi_n(p)$, 
    generator $g$ (of the order $\ell$ subgroup), 
    target $T_0 \in \F_{p^n}$, degree of the preimage $\deg P$, polynomial $f_i$,
    smoothness bound $B_1$}
  \KwOut{$t \in \{1, \ldots, \ell-1\}$, $P \in \ZZ[x]$
    s.t.~$\log_g \rho(P) \equiv t + \log_g T_0$, and the pseudonorm $\Reslt_y(\Reslt_x(P,f_i),h)$ is
    $B_1$-smooth}
  $d \gets$ the largest divisor of $n$, $1 \leq d < n$ \;
  Compute a polynomial basis  $(1, U, U^2, \ldots, U^{d-1})$ of the
  subfield $\F_{p^d}$, where $U$ satisfies $U^{p^d-1}=1 \in \F_{p^n}$\;
  \Repeat{$\Reslt_y(\Reslt_x(P,f_i),h)$ is $B_1$-smooth}{
    take $t \in \{1, \ldots, \ell-1\}$ uniformly at random \;
    $T \gets g^t T_0 \in  \F_{p^n}$ \;
    $L \gets
    \begin{bmatrix}
p & \\
  & \ddots & \\
  &        & p & \\
  \multicolumn{4}{r}{\cff(T)} \\
  \multicolumn{4}{r}{\cff(UT)} \\
  &        & \vdots \\
  \multicolumn{4}{r}{\cff(U^{d-1} T)} \\
  & & \multicolumn{3}{r}{\cff(\psi)} \\
  & & \multicolumn{3}{r}{\ddots} \\
  \multicolumn{6}{r}{\cff(x^i(y^j \psi \bmod h(y)))}
\end{bmatrix}$ \;
Compute a $\F_p$- reduced row echelon form of the rows $n-d+1$ to $n$ of $L$\;
$N \gets \mathtt{LatticeReduction}(L)$ \; 
$P \gets$ polynomial in $\ZZ[y,x]$ made of the shortest vector output by the \texttt{LatticeReduction} algorithm \;
}(\tcp*[f]{ECM, ECM+EAS, or ECM+$k$-EAS})
\Return {$t$, $P$, factorization of $\Reslt_y(\Reslt_x(P, f_i),h)$}
  \end{algorithm} 

\subsection{Properties and pseudonorm size bound}

\begin{proposition}\label{prop:log-equality-mod-ell}
The preimage $P$ output by Algorithm~\ref{alg:boot-d-subfield-ExTNFS}
satisfies $\log_g \rho(P) \equiv \log_g g^t T_0 = \log_g T_0 + t
\bmod \Phi_n(p)$, where $\rho: \ZZ[x,y] \to \F_{(p^{n_1})^{n_2}}$ was defined in Figure~\ref{fig:diag}. 
\end{proposition}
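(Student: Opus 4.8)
The plan is to track exactly which multiplicative operations are applied to $g^t T_0$ during the construction of the lattice $L$ and its reduction, and to verify that every one of them either does nothing to the image in $\F_{p^n}$ or multiplies it by an element of the proper subfield $\F_{p^d}$; then Lemma~\ref{lemma:log equality up to subgroup elt} finishes the job. First I would set $T = g^t T_0 \in \F_{p^n}$ and note that, by definition of $\rho$ and the diagram of Figure~\ref{fig:diag}, $\log_g \rho(\cff(T)) = \log_g T = \log_g g^t T_0 = \log_g T_0 + t \bmod \Phi_n(p)$, since the last congruence holds modulo the full order $\ell$ and hence modulo $\Phi_n(p)$ (or one simply observes it is an equality of integers modulo $\mathrm{ord}(g)$). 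So the claim is reduced to showing that the shortest vector $P$ produced by the algorithm satisfies $\rho(P) = u\,T$ for some $u \in \F_{p^d}^{*}$.

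Next I would describe the lattice $L$ as the $\ZZ$-span of three families of rows: (i) the rows $x^i y^j p$, whose images under $\rho$ are $0$; (ii) the rows $\cff(U^i T)$ for $0 \le i < d$, whose images under $\rho$ are $U^i T \bmod \psi$, with $U \in \F_{p^n}$ satisfying $U^{p^d-1} = 1$, i.e. $U \in \F_{p^d}$; and (iii) the rows $\cff(x^i(y^j\psi \bmod h))$, whose images under $\rho$ are $0$ because $\psi \equiv 0 \bmod \psi$. Every element of the lattice $L$ is therefore a $\ZZ$-linear combination $\sum_i c_i \cff(U^i T) + (\text{rows of type (i) and (iii)})$, and applying $\rho$ kills all the type-(i) and type-(iii) contributions, leaving $\rho(\text{vector}) = \bigl(\sum_i c_i U^i\bigr) T \bmod \psi = u T$ with $u = \sum_{i=0}^{d-1} c_i U^i \in \F_{p^d}$. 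The only subtlety is the intermediate Gaussian elimination on rows $n-d+1$ to $n$ (the $U^iT$ block): this uses $\F_p$-linear combinations and divisions by nonzero elements of $\F_p$, which still keeps each resulting row an $\F_p$-combination of the $\cff(U^iT)$ modulo rows of type (i); lifting back to $\ZZ$ changes the row only by type-(i) rows, so its $\rho$-image is unchanged. Hence after this step the lattice still has the stated property, and the \texttt{LatticeReduction} step (LLL/BKZ) only takes further integral combinations of lattice vectors, so the shortest vector $P$ still satisfies $\rho(P) = uT$ for some $u \in \F_{p^d}$.

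Finally I would need $u \ne 0$: since $P$ is a nonzero shortest vector of a full-rank lattice and $T \ne 0$, and since $\rho(P) = uT$, we get $u \ne 0$ precisely when $\rho(P) \ne 0$; this holds because $T \notin$ any proper subfield (or because $P$ is not in the sublattice generated by the type-(i) and type-(iii) rows alone — the coefficient vector $(c_i)$ is nonzero as the $U^i$ are $\F_p$-independent). With $u \in \F_{p^d}^{*}$ a nonzero element of the proper subfield $\F_{p^d} \subsetneq \F_{p^n}$ (here $1 < d < n$), Lemma~\ref{lemma:log equality up to subgroup elt} gives $\log_g \rho(P) \equiv \log_g T \bmod \Phi_n(p)$, and combining with the first paragraph yields $\log_g \rho(P) \equiv \log_g T_0 + t \bmod \Phi_n(p)$, as desired. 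The main obstacle is purely bookkeeping: being careful that the non-$\ZZ$-linear step (the $\F_p$ Gaussian elimination with division) does not take us outside the class of vectors whose $\rho$-image is an $\F_{p^d}$-multiple of $T$; once that is phrased correctly, the rest is immediate from the construction of $L$ and Lemma~\ref{lemma:log equality up to subgroup elt}.
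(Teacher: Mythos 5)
Your proof is correct and follows the same path as the paper's own argument: decompose the rows of the lattice into those mapping to $0$ under $\rho$ and those mapping to $\F_{p^d}$-multiples of $T$, observe that integer combinations preserve this property, and invoke Lemma~\ref{lemma:log equality up to subgroup elt}. The only additions you make — handling the $\F_p$-Gaussian elimination/lift carefully and noting $u\neq 0$ — are legitimate bookkeeping refinements of points the paper leaves implicit, not a different route.
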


\begin{proof}[Proof of Proposition~\ref{prop:log-equality-mod-ell}]
Each row of the row-echelon matrix $M$ represents a $\F_p$-linear
combination of the $d$ elements $U^iT$, $0 \leq i \leq d-1$,
i.e., an element $\sum_{i=0}^{d-1}  \lambda_i U^i
T$, where $\lambda_i \in \Fp$. We can factor $T$ in the
expression. Each element $u_j=\sum_{i=0}^{d-1}  \lambda_i U^i$ satisfies
$u_j^{p^d-1}=1$, i.e., is in $\F_{p^d}$ by
construction. So each row represents an element $T_j = u_j T$, where
$u_j^{p^d-1}=1$ ($u_j \in \F_{p^d}$), so that $\log T_j \equiv \log T
\bmod \Phi_n(p)$ by
Lemma~\ref{lemma:log equality up to subgroup elt}. 

The second part of the proof uses the same argument: the short vector
output by the LLL algorithm is a linear combination of the rows of the
matrix $N$. Each row represents either 0 or a $\F_{p^d}$-multiple  $T_j$ of
$T$, hence the short vector is also a $\F_{p^d}$-multiple of $T$.
We conclude thanks to Lemma~\ref{lemma:log equality up to subgroup elt}, 
that $\log \rho(P) \equiv \log T \bmod \Phi_n(p)$.
\end{proof}

\begin{proposition}\label{prop:pseudonorm-bound-ExTNFS}
  The pseudonorm of $P$ in Algorithm~\ref{alg:boot-d-subfield-ExTNFS}
  has size
  \begin{equation}
    \label{eq:pseudonorm-bound-ExTNFS}
|\Reslt_y(\Reslt_x(P, f_i),h)| = O\left( Q^{(1-\frac{d}{n})\frac{\deg f_i}{\deg_x
      P+1}} \|f_i\|_\infty ^{n_1 \deg_x P} \right)
  \end{equation}
assuming that $\|h\|_\infty = O(1)$. 
\end{proposition}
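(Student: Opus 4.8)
The plan is to reduce the proposition to a bound on $\|P\|_\infty$, the largest absolute value of a coefficient of the preimage returned by Algorithm~\ref{alg:boot-d-subfield-ExTNFS}, and then to substitute that bound into the Tower-NFS norm bound~\eqref{eq:ExTNFS-pseudonorm}. The first task is to identify the lattice that the matrix $L$ describes. Let $\mathcal{L}\subset\ZZ^{n_1(\deg_x P+1)}$ be the lattice of coefficient vectors of integer polynomials $Q$ of degree $<\deg_x P+1$ in $x$ and $<n_1$ in $y$ with $\rho(Q)\in\F_{p^d}\cdot\rho(T)$, where $\rho$ is reduction modulo $(p,h,\psi)$. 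Every row of $L$ lies in $\mathcal{L}$: the $p$-relation rows and the rows $\cff\bigl(x^i(y^j\psi\bmod h)\bigr)$ reduce to $0$ under $\rho$, while $\cff(U^iT)$ reduces to $U^i\rho(T)\in\F_{p^d}\rho(T)$; the $\F_p$-row-echelon step on the rows $U^iT$ turns this spanning family into an explicit basis, leaving $L$ triangular.

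To compute $\det\mathcal{L}$ I would use an index argument. Reduction modulo $p$ gives an $\F_p$-linear map $\bar\rho\colon\F_p^{n_1(\deg_x P+1)}\to\F_{p^n}$ that is onto (polynomials of $x$-degree at most $\deg_x P\ge n_2-1$ already represent all of $\F_{p^n}$), with kernel of dimension $n_1(\deg_x P+1)-n$ spanned by the classes of the $x^i(y^j\psi)$. Since $T=g^tT_0$ is a unit and, by assumption, not in a proper subfield, $\rho(T)\neq 0$, so $u\mapsto u\rho(T)$ is injective on $\F_{p^d}$ and $\F_{p^d}\rho(T)$ has $\F_p$-dimension $d$. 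Hence $\mathcal{L}/p\ZZ^{n_1(\deg_x P+1)}=\bar\rho^{-1}\bigl(\F_{p^d}\rho(T)\bigr)$ has $\F_p$-dimension $\bigl(n_1(\deg_x P+1)-n\bigr)+d$, so the index of $\mathcal{L}$ in $\ZZ^{n_1(\deg_x P+1)}$ is $p^{\,n-d}$ and $\det\mathcal{L}=p^{\,n-d}$ (the same value is visible directly from the triangular form of $L$).

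Next I would apply the LLL bound recalled in Section~\ref{sec:preliminaries}: the first vector of the reduced basis of $\mathcal{L}$ produced by the lattice reduction step has $\infty$-norm at most $C\,(\det\mathcal{L})^{1/(n_1(\deg_x P+1))}$, with $C$ the LLL approximation factor, so that
\[
\|P\|_\infty=\|a_{ij}\|_\infty\ \le\ C\,p^{\,(n-d)/(n_1(\deg_x P+1))}.
\]
That $P$ is a legitimate preimage, i.e.\ $\log_g\rho(P)\equiv t+\log_g T_0\pmod{\Phi_n(p)}$, is exactly Proposition~\ref{prop:log-equality-mod-ell}. Substituting the coefficient bound into~\eqref{eq:ExTNFS-pseudonorm} with $\deg h=n_1$ gives
\[
|\Reslt_y(\Reslt_x(P,f_i),h)|\ <\ \|a_{ij}\|_\infty^{\,n_1\deg f_i}\,\|f_i\|_\infty^{\,n_1\deg_x P}\,\|h\|_\infty^{(\deg_x P+\deg f_i)(n_1-1)}\,D(n_1,\deg f_i),
\]
and, using $p^{\,n-d}=(p^n)^{1-d/n}=Q^{1-d/n}$,
\[
\|a_{ij}\|_\infty^{\,n_1\deg f_i}\ \le\ C^{\,n_1\deg f_i}\,p^{\,(n-d)\deg f_i/(\deg_x P+1)}\ =\ C^{\,n_1\deg f_i}\,Q^{\,(1-d/n)\deg f_i/(\deg_x P+1)}.
\]
Under the hypothesis $\|h\|_\infty=O(1)$ the factor $\|h\|_\infty^{(\deg_x P+\deg f_i)(n_1-1)}$ is of lower order, and the nuisance factors $C^{\,n_1\deg f_i}$ and $D(n_1,\deg f_i)$ are likewise negligible compared with $Q^{\Theta(1)}$ in the parameter ranges considered; absorbing all of them into $O(\cdot)$ produces exactly~\eqref{eq:pseudonorm-bound-ExTNFS}.

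The one genuinely delicate point is the determinant claim $\det L=p^{\,n-d}$: that the rows listed in Algorithm~\ref{alg:boot-d-subfield-ExTNFS} span all of $\mathcal{L}$ rather than a proper finite-index sublattice, and that $\bar\rho$ is onto $\F_{p^n}$ (which holds throughout the relevant range of output degrees, $\deg_x P\ge n_2-1$). I would settle this through the intrinsic index computation above together with the triangular shape of $L$ after the $\F_p$-echelon step, which forces the row lattice to coincide with $\mathcal{L}$. Everything else is routine bookkeeping, in particular keeping track of which combinatorial and LLL factors $O(\cdot)$ is permitted to swallow.
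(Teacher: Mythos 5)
Your proof is correct and follows the same route as the paper: read off $\det L = p^{n-d}$, apply the LLL shortest-vector bound to get $\|P\|_\infty \le C\,p^{(n-d)/(n_1(\deg_x P+1))}$, and substitute into the Tower-NFS pseudonorm bound~\eqref{eq:ExTNFS-pseudonorm}, absorbing $\|h\|_\infty$, $C$, and the combinatorial factor into $O(\cdot)$. The only difference is that you justify $\det L=p^{n-d}$ via an index argument on the lattice $\mathcal L$, whereas the paper simply reads it off the triangular shape of the matrix; this is a welcome extra bit of rigor but not a different method.
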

\begin{proof}[Proof of Proposition~\ref{prop:pseudonorm-bound-ExTNFS}.]
The matrix $N$ computed in Algorithm~\ref{alg:boot-d-subfield-ExTNFS}
is a square matrix of $(\deg_xP+1)n_1$ rows
and columns, whose coefficients are in $\F_p$. Its determinant is
$\det N = p^{n-d} = Q^{1-d/n}$. Using the LLL algorithm for the
lattice reduction, the coefficients of the
shortest vector $P$ are bounded by $C Q^{(1-d/n)/((\deg_x P+1)n_1)}$,
where $C$ is the LLL factor.
We obtain the bound~\eqref{eq:pseudonorm-bound-ExTNFS}
according to the bound formula
\eqref{eq:ExTNFS-pseudonorm}, and neglecting the combinatorial factor
$D(n_1, \deg f_i)$. Moreover in the Tower-NFS setting, the polynomial selection is designed such
that $\|h\|_\infty = O(1)$.
\end{proof}

We finally obtain the following.
\begin{theorem} \label{th: smaller norm bound}
Let $\GF(p^n)$ be a finite field, and let $d$ be the largest divisor of $n$, $d<n$, and $d=1$ if $n$ is prime.
Let $n=n_1n_2$ and $h,\psi,f_i$ be given by a polynomial selection method.
Let $T \in \FF_{(p^{n_1})^{n_2}}$ be an element which is not in a proper subfield
of $\FF_{p^n}$.
Then there exists a preimage $P \in \ZZ[x,y]$ of $T$,
of any degree (in $x$) between $\lfloor n_2-d/n_1 \rfloor$ and $\deg f_i-1$, of coefficients bounded
by $O(Q^{(1-\frac{d}{n})  \frac{1}{(\deg P +1)n_1}})$,
and such that when $P$ is mapped in $\F_{(p^{n_1})^{n_2}}$ as $\rho(P)$, its discrete
logarithm is equal to the discrete logarithm of $T$ modulo $\Phi_n(p)$ (and in particular
modulo any prime divisor $\ell$ of $\Phi_n(p)$), that is,
$$\log \rho(P) \equiv \log T \bmod \Phi_n(p)~.$$
The degree of $P$ in $x$ and the polynomial $f_i$ can be chosen to
minimize the resultant (pseudonorm):
$$ \min_{i} \min_{\lfloor n_2-d/n_1\rfloor \leq \deg_x P \leq
  \deg f_i-1 } \|f_i\|_{\infty}^{n_1\deg_x P} Q^{(1-\frac{d}{n})
  \frac{\deg f_i}{\deg_x P +1}} ~.$$
\end{theorem}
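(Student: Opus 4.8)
The plan is to assemble the statement directly from the two preceding propositions, which together already contain all the technical content; Theorem~\ref{th: smaller norm bound} is essentially a packaging result. First I would observe that Algorithm~\ref{alg:boot-d-subfield-ExTNFS} is parametrized by a target degree $\deg_x P$, and that for \emph{any} choice of $\deg_x P$ in the range $\lfloor n_2-d/n_1\rfloor \leq \deg_x P \leq \deg f_i - 1$ the lattice $L$ described there is well-defined: the $d$ relations $x^iy^j p \equiv 0$ and $U^iT$ together with the $n_1(\deg P - n_2)$ relations $x^i(y^j\psi \bmod h)$ span a full-rank sublattice of $\ZZ[x,y]$ of the appropriate dimension $n_1(\deg P + 1)$, with determinant $p^{n-d} = Q^{1-d/n}$ (the $p$'s contribute $p^{n-d}$ since there are exactly $n-d$ of them after the row-echelon step on the $U^iT$ block, and the $\psi$-block is unimodular in its leading positions). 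So the existence of $P$ of the claimed degree and coefficient bound is exactly the output guarantee of the algorithm, and the coefficient bound $O(Q^{(1-d/n)\frac{1}{(\deg P+1)n_1}})$ is the LLL bound $C\det(L)^{1/(n_1(\deg P+1))}$ from Proposition~\ref{prop:pseudonorm-bound-ExTNFS}, absorbing the LLL constant $C$ into the $O(\cdot)$.

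Next I would invoke Proposition~\ref{prop:log-equality-mod-ell} verbatim for the discrete-logarithm equality: the shortest vector output by LLL is a $\ZZ$-linear combination of rows of $N$, each of which represents either $0$ or an element of the form $u_jT$ with $u_j \in \F_{p^d}$ a proper subfield (since $d < n$, $\F_{p^d} \subsetneq \F_{p^n}$), hence $\rho(P) = uT$ for some $u \in \F_{p^d}$, and Lemma~\ref{lemma:log equality up to subgroup elt} gives $\log\rho(P) \equiv \log T \bmod \Phi_n(p)$, and in particular modulo any prime divisor $\ell$ of $\Phi_n(p)$. One caveat worth stating: the hypothesis that $T$ is not in a proper subfield is what guarantees the $U^iT$ block has full rank $d$ over $\F_p$ (equivalently, that $T$ generates $\F_{p^n}$ as an $\F_{p^d}$-module in the relevant sense), so the row-echelon step produces exactly $d$ pivots and the determinant count $Q^{1-d/n}$ is correct; I would note this briefly.

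For the final minimization display, the point is simply that the pseudonorm bound~\eqref{eq:pseudonorm-bound-ExTNFS} from Proposition~\ref{prop:pseudonorm-bound-ExTNFS}, namely $|\Reslt_y(\Reslt_x(P,f_i),h)| = O(Q^{(1-\frac{d}{n})\frac{\deg f_i}{\deg_x P+1}}\|f_i\|_\infty^{n_1\deg_x P})$, holds for every admissible pair $(f_i, \deg_x P)$; since the algorithm is free to choose both, the best achievable pseudonorm is the minimum of this expression over the allowed ranges, which is precisely what is displayed. No further argument is needed here beyond noting that the minimum over a finite nonempty set is attained.

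The main obstacle, such as it is, is not a deep one: it is making sure the claimed \emph{range} of $\deg_x P$ is genuinely achievable, i.e.\ that for the smallest value $\lfloor n_2 - d/n_1\rfloor$ the $\psi$-block contributes no rows (the count $\deg P - n_2$ can be zero or negative, in which case that block is empty and the matrix is just the $d$-relation block padded to size $n_1(\deg P+1)$ by the $p$'s), and conversely that for $\deg_x P = \deg f_i - 1$ the construction degenerates gracefully to the ``no degree reduction'' case so that the LLL step still applies. I would handle this by remarking that in all these boundary configurations the lattice remains full-rank of determinant $Q^{1-d/n}$ and the LLL bound is unaffected, so Propositions~\ref{prop:log-equality-mod-ell} and~\ref{prop:pseudonorm-bound-ExTNFS} apply uniformly across the stated range, completing the proof.
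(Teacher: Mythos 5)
Your proposal is correct and follows essentially the same route as the paper: the theorem is stated there immediately after Propositions~\ref{prop:log-equality-mod-ell} and~\ref{prop:pseudonorm-bound-ExTNFS} with the phrase ``We finally obtain the following,'' i.e.\ it is indeed a packaging of the algorithm, the determinant/LLL computation, and those two propositions, which is exactly what you assemble. One small inaccuracy in a side remark: the full rank of the $U^iT$ block does not rely on $T$ avoiding proper subfields --- since $\{1,U,\dots,U^{d-1}\}$ is a polynomial basis of $\F_{p^d}$, the vectors $U^iT$ are $\F_p$-independent for \emph{every} nonzero $T$; the hypothesis that $T$ lies in no proper subfield is rather what makes the conclusion $\log\rho(P)\equiv\log T\bmod\Phi_n(p)$ nontrivial (otherwise $\log T\equiv 0$ there). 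Your boundary discussion is also slightly optimistic for $\deg_x P<n_2-1$, where the truncated $U^iT$ rows do not all fit the shortened coefficient window and the clean $\det=p^{n-d}$ count no longer applies verbatim; but the paper glosses over exactly the same point, so this is not a defect relative to the source.
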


We recall in Table~\ref{tab: deg size polyselect comparison}
the degree and coefficient sizes of the
polynomial selections published as of July 2017.

\begin{corollary}\label{corl:sizes-pseudonorms}
  With the notation of Table~\ref{tab: deg size polyselect comparison}
  and the NFS setting corresponding to $n_2=n$ and $n_1=1$,
  \begin{enumerate}
\item For the polynomial selection methods where there is a side $i$ such that
$\|f_i\|_\infty = O(1)$ (GJL, Conjugation, Joux--Pierrot and
Sarkar--Singh up to now), we do the initial splitting on this side and choose
$\deg_x P = \deg_x f_i-1$ to obtain the smallest norm: 
$ |\Reslt_y(\Reslt_x(P, f_i),h)| = O\left( Q^{1-\frac{d}{n}}
\right)$. We obtain the same bound for NFS and its tower variants.
  \item When $\| f_i\|_\infty = Q^{1/(2n)}$ as for the \JLSVi{}
    method, the bound is \linebreak
    $Q^{(1-\frac{d}{n})\frac{n_2}{\deg_x P+1}+\frac{\deg_x P}{2 n_2}}$.
    When $\deg_x P = \deg_x f_i - 1 = n_2-1$, one obtains
    $Q^{\frac{3}{2} - \frac{d}{n} - \frac{1}{2n_2}}$.
    In the NFS setting, $n_2=n$, while in the tower setting, $n_2< n$
    and the pseudonorm is slightly smaller.
  \item When $\| f_i\|_\infty = Q^{1/(n_1(D+1))}$ as  for the \JLSVii{} method,
    the lower bound is
    $Q^{\frac{\deg_x P}{D+1} + (1-\frac{d}{n})\frac{n_2}{\deg_x P+1}}$ on the $f_0$-side
    where $\deg f_0 = n_2$, and it is \linebreak
    $Q^{\frac{\deg_x P}{D+1} + (1-\frac{d}{n})\frac{D}{\deg_x P+1}}$ on
    the $f_1$-side, where $\deg f_1 = D \geq n_2$.
    According to the value of $n$, one can decide which value of
    $\deg_x P$ will produce a smaller norm.
\end{enumerate}
\end{corollary}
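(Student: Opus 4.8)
The plan is to specialize the general pseudonorm estimate to each polynomial-selection family listed in Table~\ref{tab: deg size polyselect comparison}. Concretely, I would start from the bound of Theorem~\ref{th: smaller norm bound} (equivalently the estimate \eqref{eq:pseudonorm-bound-ExTNFS} of Proposition~\ref{prop:pseudonorm-bound-ExTNFS}),
\[
|\Reslt_y(\Reslt_x(P, f_i),h)| = O\!\left( Q^{(1-\frac{d}{n})\frac{\deg f_i}{\deg_x P+1}}\,\|f_i\|_\infty^{n_1 \deg_x P} \right),
\]
substitute the table values of $\deg f_i$ and $\|f_i\|_\infty$, simplify using the NFS normalization $n_1=1$, $n_2=n$ required in the statement, and finally read off either the optimal $\deg_x P$ or the value at the natural endpoint $\deg_x P=\deg f_i-1$. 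For each assertion I would also keep track of what the same substitution gives in a tower setting, where $n_1>1$, $n_1n_2=n$, and $n_2<n$.

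For part~(1), on the side where $\|f_i\|_\infty$ is bounded by an absolute constant, the second factor is $\|f_i\|_\infty^{n_1\deg_x P}=\exp(O(n))$, which I will argue below is $Q^{o(1)}$ and hence absorbed into $O(\cdot)$; what remains, $Q^{(1-d/n)\deg f_i/(\deg_x P+1)}$, is strictly decreasing in $\deg_x P$, so the minimum over the admissible range $\lfloor n_2-d/n_1\rfloor\le \deg_x P\le \deg f_i-1$ is attained at $\deg_x P=\deg f_i-1$ (which lies in the range, since $\deg f_i-1\ge n_2-1\ge n_2-d/n_1\ge\lfloor n_2-d/n_1\rfloor$), and there the exponent collapses to $1-d/n$, giving $O(Q^{1-d/n})$. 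The tower computation is identical because $n_1\deg f_i=O(n)$ there as well, so the $\|f_i\|_\infty^{n_1\deg_x P}$ factor is again subexponential and the leading term is unchanged, which is the ``same bound for NFS and its tower variants'' claim.

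For parts~(2) and~(3) one simply substitutes. With $\|f_i\|_\infty=Q^{1/(2n)}$, $\deg f_i=n_2$ and $n_1=1$ (so $\|f_i\|_\infty^{n_1\deg_x P}=Q^{\deg_x P/(2n_2)}$), the exponent becomes $(1-\tfrac dn)\frac{n_2}{\deg_x P+1}+\frac{\deg_x P}{2n_2}$; evaluating at $\deg_x P=\deg f_i-1=n_2-1$ turns the first term into $1-d/n$ and the second into $\tfrac12-\tfrac1{2n_2}$, which sums to $\tfrac32-\tfrac dn-\tfrac1{2n_2}$, and since $n_2<n$ in the tower case the quantity $\tfrac1{2n_2}$ grows and the exponent drops, which is the claimed ``slightly smaller'' pseudonorm. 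The \JLSVii{} case is the same manipulation: with $\|f_i\|_\infty=Q^{1/(n_1(D+1))}$ one has $\|f_i\|_\infty^{n_1\deg_x P}=Q^{\deg_x P/(D+1)}$, so plugging $\deg f_0=n_2$ gives $Q^{\frac{\deg_x P}{D+1}+(1-\frac dn)\frac{n_2}{\deg_x P+1}}$ on the $f_0$-side, and replacing $n_2$ by $\deg f_1=D$ gives the $f_1$-side bound; no optimization is claimed there, only the two formulas.

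The only step that is not pure substitution --- and the main thing to be careful about rather than a genuine obstacle --- is justifying that $\|f_i\|_\infty^{n_1\deg_x P}$ is negligible against $Q^{1-d/n}$ when $\|f_i\|_\infty=O(1)$. For this I would use that in the medium and large characteristic regime ($p\ge L_Q[1/3,\cdot]$) one has $n=\log Q/\log p=O\big((\log Q)^{2/3}\big)$, hence, since $n_1\deg_x P=O(n)$ in both normalizations ($n_1n_2=n$, $\deg f_i=O(n_2)$), the factor is $\exp(O(n))=L_Q[2/3,O(1)]=Q^{o(1)}$ and is swallowed by the $O(\cdot)$. Everything else reduces to the elementary monotonicity of $\deg_x P\mapsto \deg f_i/(\deg_x P+1)$ on the admissible interval, which is what pins the optimal choice $\deg_x P=\deg f_i-1$ in part~(1) and underlies the displayed minimization of Theorem~\ref{th: smaller norm bound}.
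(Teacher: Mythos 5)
Your proof is correct and takes essentially the same approach the paper relies on: specialize the bound of Proposition~\ref{prop:pseudonorm-bound-ExTNFS} / Theorem~\ref{th: smaller norm bound} to the degree and coefficient-size data of Table~\ref{tab: deg size polyselect comparison}, use the monotonicity of $\deg f_i/(\deg_x P+1)$ to pick $\deg_x P = \deg f_i - 1$ in part~(1), and read off the exponent in each case. Your extra care in showing that $\|f_i\|_\infty^{n_1\deg_x P}$ is $Q^{o(1)}$ when $\|f_i\|_\infty=O(1)$ (via $n = O((\log Q)^{2/3})$ in medium and large characteristic) makes explicit a step the paper leaves implicit, and is consistent with its conventions.
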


\begin{table}[hbtp]
\caption{Properties: degree and coefficient size of the main
  polynomial selection methods for NFS-DL in $\FF_{Q}$, where $Q=p^n$.}
We give a bound on the coefficient size of the polynomials with the notation $\|f_i\|_\infty = O(x)$.
 To lighten the notation, we only write $x$, without $O()$.
 In the Joux--Pierrot method, the prime
  $p$ can be written $p=p_x(x_0)$, where $p_x$ is a polynomial of tiny
  coefficients and degree at least $2$. This table takes into account the
  methods published until July 2017.
\centering
  
\label{tab: deg size polyselect comparison}
\begin{tabular}{|@{\hspace*{2pt}}c@{\hspace*{2pt}}||@{\hspace*{2pt}}c@{\hspace*{2pt}}|c | c |@{\hspace*{2pt}}c@{\hspace*{2pt}}| c|}
\hline method & $\deg h$ & $\deg f_0$ & $\deg f_1$ & $\|f_0\|_\infty$ & $\|f_1\|_\infty$ \\
\hline    NFS & \mc{5}{c|}{} \\
\hline
JLSV\textsubscript{1} \cite{C:JLSV06} &  & $n$        &  $n$    & $Q^{1/2n}$  & $Q^{1/2n}$    \\
JLSV\textsubscript{2} \cite{C:JLSV06} &  & $n$        &  $D > n$& $Q^{1/(D+1)}$& $Q^{1/(D+1)}$ \\
GJL \cite{Mat06,PhD:Barbulescu13,EC:BGGM15}&&$D+1$&$D\geq n$& $\log p$   & $Q^{1/(D+1)}$  \\
Conjugation \cite{EC:BGGM15}      &  & $2n$       & $n$     & $\log p$   & $Q^{1/2n}$     \\
  \begin{tabular}{@{}c@{}}
    Joux-Pierrot \cite{PAIRING:JouPie13}\\
    $p=p_x(x_0)$ 
  \end{tabular}                   &  & $n(\deg p_x)$& $n$   & $\log p$   & $Q^{1/(n \deg p_x)}$ \\
  \begin{tabular}{@{}c@{}}
    Sarkar-Singh \cite{EC:SarSin16}\\
    $n=n_1n_2$, $D \geq n_2$
  \end{tabular}
                                  &  & $(D+1)n_1$ & $D n_1$ & $\log p$   & $Q^{1/(n_1(D+1))}$ \\
\hline Tower-NFS & \mc{5}{c|}{} \\
\hline TNFS + base-$m$ \cite{AC:BarGauKle15} 
                                  &$n$& $D$       & $1$     & $p^{1/D}$  & $p^{1/D}$ \\
\hline
  \begin{tabular}{@{}c@{}}
    Tower-JLSV\textsubscript{1}\\
    $n=n_1n_2$
  \end{tabular}
            & $n_1$ & $n_2$ & $n_2$ & $Q^{1/(2n)}$ & $Q^{1/(2n)}$ \\
  \begin{tabular}{@{}c@{}}
    Tower-JLSV\textsubscript{2} \\
    $n=n_1n_2$ \cite{EPRINT:Kim15,C:KimBar16}
  \end{tabular}
            & $n_1$ & $n_2$ & $D \geq n_2$ & $Q^{1/(n_1(D+1))}$ & $Q^{1/(n_1(D+1))}$ \\
  \begin{tabular}{@{}c@{}}
    Tower-GJL \\
    $n=n_1n_2$ \cite{C:KimBar16}
  \end{tabular}
            & $n_1$ & $D+1$ & $D\geq n_2$ & $\log p$ & $Q^{1/(n_1(D+1))}$ \\
  \begin{tabular}{@{}c@{}}
    Tower-Conjugation \\
    $n=n_1n_2$ \cite{EPRINT:Barbulescu15,C:KimBar16,PKC:KimJeo17} \\
  \end{tabular}
            & $n_1$ & $2n_2$ & $n_2$ & $\log p$ & $Q^{1/(2n)}$ \\
  \begin{tabular}{@{}c@{}}
    Tower-Joux--Pierrot \\
    $n=n_1n_2$, $p=p_x(x_0)$ \\
    \cite{C:KimBar16,PKC:KimJeo17} \\
\end{tabular}
            & $n_1$ & $n_2(\deg p_x)$ & $n_2$ & $\log p$ & $Q^{1/(n \deg p_x)}$ \\
  \begin{tabular}{@{}c@{}}
    Tower-Sarkar--Singh\\
    $n=n_1n_2n_3$, $D\geq n_3$ \\
    \cite{AC:SarSin16,EPRINT:SarSin16:401,EPRINT:SarSin16:537}
  \end{tabular}
            & $n_1$ & $(D+1)n_2$ & $Dn_2$ & $\log p$ & $Q^{1/(n_1 n_2(D+1))}$ \\
  \hline
\end{tabular}
\end{table}

\subsection{running time}
\label{subsec:running-time-initial-splitting}

To apply Lemma~\ref{lemma: asympt cpx boot} to the initial splitting case, we make the
usual heuristic assumption that the pseudonorms of the elements
$g^tT_0$ behave asymptotically like random integers of the same
size. Their size is $O(Q^e)$, so we replace $N^e$ by $Q^e$.
The basis $\{1,U,\ldots, U^{d-1}\}$ can be precomputed.
The cost of computing the $U^iT$ for $0\leq i \leq d-1$ is at most
$dn^2$ multiplications in $\Fp$ with a schoolbook multiplication
algorithm. We can roughly upper-bound it by $O(n^3)$.
The time needed to compute the reduced row-echelon form of a $d \times n$ matrix is in
$O(n^3)$ which is polynomial in $n$ \cite{HFF:DumPer13}. These two complexities are
asymptotically negligible compared to any $L_Q[\alpha > 0]$.
We obtain the following.
\begin{corollary}\label{corl:running-time}
  The running time of the initial splitting step with
  Algorithm~\ref{alg:boot-d-subfield-ExTNFS}
  to find a $B$-smooth pseudonorm, where the
  pseudonorm has size $O(Q^e)$ for a fixed real number $e>0$
  determined by the polynomial selection
  (Table~\ref{tab: deg size polyselect comparison}, two right-most columns), is
  \begin{enumerate}
  \item $L_Q[1/3, c=(3e)^{1/3}]$ with ECM to perform the smoothness
    tests;
  \item $L_Q[1/3, c=(3e)^{1/3}(23/27)^{2/3}]$ with ECM and EAS;
  \item $L_Q[1/3, c=(3e)^{1/3} ((15+4(2/3)^{3k})/19)^{2/3}]$ with ECM
    and $k$-EAS.
  \end{enumerate}
  For each case, the lower bound was obtained for $B=L_Q[2/3, e/c]$.
\end{corollary}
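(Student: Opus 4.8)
The plan is to obtain Corollary~\ref{corl:running-time} by feeding the pseudonorm-size bound already proved for Algorithm~\ref{alg:boot-d-subfield-ExTNFS} into the smoothing-cost estimates of Section~\ref{sec:pomerance}. Concretely, Proposition~\ref{prop:pseudonorm-bound-ExTNFS} (equivalently, the explicit form in Theorem~\ref{th: smaller norm bound}, with the case analysis of Corollary~\ref{corl:sizes-pseudonorms}) shows that for the value of $e$ read off the two right-most columns of Table~\ref{tab: deg size polyselect comparison}, together with the optimal choice of the side $f_i$ and of $\deg_x P$, each iteration of the main \texttt{Repeat} loop produces a preimage $P$ whose pseudonorm $\Reslt_y(\Reslt_x(P,f_i),h)$ has size $O(Q^e)$. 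So I would first record this precisely, observing that the LLL approximation factor $C$ and the combinatorial term $D(\cdot,\cdot)$ appearing in~\eqref{eq:ExTNFS-pseudonorm} are subexponential in $\log Q$ and are therefore absorbed into the $o(1)$ in the exponent of the $L_Q$ notation.

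Next I would bound the cost of one pass of the loop apart from the smoothness test. Computing $g^t T_0$, assembling $L$, forming the rows $U^i T$ for $0\le i\le d-1$ (at most $dn^2 = O(n^3)$ multiplications in $\Fp$), computing an $\Fp$-reduced row-echelon form of the $d$ middle rows ($O(n^3)$ operations in $\Fp$, \cite{HFF:DumPer13}), and running a lattice-reduction algorithm on a lattice of dimension $n_1(\deg_x P+1)=O(n)$ with entries of bit-size $O(\log p)$: all of these are polynomial in $n$ and $\log p$, hence polynomial in $\log Q$, i.e.\ $L_Q[0]$, and thus negligible against any $L_Q[\alpha]$ with $\alpha>0$. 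Hence the expected running time of the loop equals, up to this negligible overhead, (expected number of trials) times (cost of one ECM-based smoothness test of an integer of size $O(Q^e)$).

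I would then invoke the standard heuristic, stated at the start of Section~\ref{sec:pomerance}, that the pseudonorms of the random shifts $g^t T_0$ behave, for smoothness purposes, like integers chosen uniformly at random below $Q^e$. Under this heuristic the quantity just described is exactly the one analysed in Lemma~\ref{lemma: asympt cpx boot} with $N$ replaced by $Q$: the expected running time is $L_Q[1/3,(3e)^{1/3}]$, attained for $B=L_Q[2/3,e/c]$ with $c=(3e)^{1/3}$, which is item~(1). For items~(2) and~(3) I would instead apply Lemma~\ref{lemma:asympt-cpx-boot-ECM+EAS} with $N=Q$: ECM combined with Pomerance's one-test Early Abort Strategy gives $L_Q[1/3,(3e)^{1/3}(23/27)^{2/3}]$, and with the $k$-test version it gives $L_Q[1/3,(3e)^{1/3}((15+4(2/3)^{3k})/19)^{2/3}]$, in each case with $B=L_Q[2/3,e/c]$. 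Since the Gaussian-elimination and lattice-reduction overhead is identical in the three cases and stays negligible, this settles all three bounds and the stated optimal $B$.

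The only genuinely non-routine point is the heuristic of the previous paragraph: as in every NFS running-time analysis, one cannot prove that pseudonorms are as smooth as random integers of the same size, and this assumption (already flagged in Section~\ref{sec:pomerance}) is what the argument rests on. Everything else is bookkeeping — checking that $e$ is precisely the exponent emerging from the resultant bounds of Section~\ref{subsec: norm bound} for the chosen polynomial selection, and verifying that the per-iteration algebraic cost, including the LLL factor $C$, is indeed swallowed by the $o(1)$ in $L_Q[\cdot,\cdot]$; I expect this last verification (particularly the control of $C$ when $n$ is large, as in the medium-characteristic range) to be the fiddliest part of writing out the details.
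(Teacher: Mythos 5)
Your proof follows the same route the paper takes: plug the pseudonorm bound $O(Q^e)$ from Proposition~\ref{prop:pseudonorm-bound-ExTNFS} and Corollary~\ref{corl:sizes-pseudonorms} into Lemmas~\ref{lemma: asympt cpx boot} and~\ref{lemma:asympt-cpx-boot-ECM+EAS} with $N=Q$ under the standard smoothness heuristic, after noting that the per-iteration algebraic work ($U^iT$ products, row-echelon form, lattice reduction) is polynomial in $n$ and $\log p$ and hence negligible against any $L_Q[\alpha>0]$. Your additional remarks on absorbing the LLL approximation factor $C$ and the combinatorial term $D(\cdot,\cdot)$ into the $o(1)$ are a worthwhile refinement that the paper treats only implicitly, but the substance and structure of the argument are the same as the paper's.
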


Corollary~\ref{corl:sizes-pseudonorms} gives a bound on the size of
the pseudonorms, from which we can deduce $e$ to apply
Corollary~\ref{corl:running-time}, and get the expected running time.

\section{Examples}
\label{sec:large-char-examples}
\begin{example}\label{ex:Fp6}
  Let $p = \lfloor 10^{25}\pi \rfloor + 7926 =
  \coeffs{31415926535897932384634359}$ be a 85-bit prime made of the
  first 26 decimals of $\pi$
  so that $\F_{p^6}$ is a 509-bit finite field. Moreover, $\Phi_6(p)
  = p^2 - p + 1$ is a 170-bit prime, we denote it by 
  $\ell = \coeffs{986960440108935861883947021513080740536833738706523}$. 
We want to compute discrete logarithms in
  the order-$\ell$ cyclotomic subgroup of $\F_{p^6}$. 
The \JLSVi{} method computes two polynomials $f_0,f_1$, where $\deg f_0 = \deg f_1
= 6$, and $\|f_i\|_\infty \approx p^{1/2}$. In our example, we
have $\log_2 \|f_0\|_\infty = 44.67$ and $ \log_2 \|f_1\|_\infty = 46.67$
(and $\log_2 p/2 = 42.35$):
$$ \begin{array}{@{}r@{\hspace*{3pt}}l@{}}
f_0 =& x^6 - \coeffs{11209975711932\ } x^5 - \coeffs{28024939279845\ } x^4 - \coeffs{20\ } x^3 \\
     &  + \coeffs{28024939279830\ } x^2 + \coeffs{11209975711938\ } x + 1 \\
f_1 =& \coeffs{5604994576830\ } x^6 + \coeffs{20986447533158\ } x^5 - \coeffs{31608799819555\ } x^4 \\
     & - \coeffs{112099891536600\ } x^3  - \coeffs{52466118832895\ } x^2 + \coeffs{12643519927822\ } x \\
     & + \coeffs{5604994576830}. \\  
\end{array}$$
Since $f_0$ is already of degree $6$ and monic, it can define the
extension $\F_{p^6} = \F_p[x]/(f_0(x))$.
Let $T_0$ be our target in $\F_{p^6}$ whose coefficients are made
of the decimals of $\pi$ (starting at the 26-th decimal, since the
first 25 ones were already used for $p$):
$$\begin{array}{@{}r@{\hspace{3pt}}c@{\hspace{3pt}}l@{\hspace{3pt}}}
  T_0 &=& \coeffs{6427704988581508162162455\ } x^5 + \coeffs{16240052432693899613177738\ } x^4 \\
& &  + \coeffs{4509390283780949909020139\ } x^3 + \coeffs{3868374359445757647591444\ } x^2 \\
& &  + \coeffs{8209755913602112920808122\ } x + \coeffs{3279502884197169399375105}. \\
\end{array}$$
Let $g = x+3$ be a generator of $\F_{p^6}$.
Let $(1, U, U^2)$ be a polynomial basis of $\F_{p^3}$ considered as an implicit
subfield of $\F_{p^6}$, where $U = g^{1+p^3} =
\Norm_{\F_{p^6}/\F_{p^3}}(g)$. 
We run Algorithm~\ref{alg:boot-d-subfield-ExTNFS} and find that the
fourth preimage of $T=g^{812630}T_0$ gives a 61-smooth
pseudonorm. 
We compute the reduced row-echelon form
$$M = \begin{bmatrix}
  m_{00} & m_{01} & m_{02} & 1 & 0 & 0 \\
  m_{10} & m_{11} & m_{12} & m_{13} & 1 & 0 \\
  m_{20} & m_{21} & m_{22} & m_{23} & m_{24} & 1 \\
\end{bmatrix}
\mbox{ of the matrix }
\begin{bmatrix}
  T \\
  UT \\
  U^2 T \\
\end{bmatrix}~,$$
where
$$\begin{array}{ll}
m_{00}= \coeffs{30930778358987253373198053} & m_{01}= \coeffs{16172276732961477886471865},\\
m_{02}= \coeffs{251875570676859576731124}   & m_{10}= \coeffs{8981071706647180870633008}, \\
m_{11}= \coeffs{26297121233008662476505921} & m_{12}= \coeffs{4999545867425989707589927}, \\
m_{13}= \coeffs{4380553940470247124926451}  & m_{20}= \coeffs{4787502941827866787698085}, \\
m_{21}= \coeffs{18855419729462744536987506} & m_{22}= \coeffs{15450347628775338768673252}, \\
m_{23}= \coeffs{31092163492444411597011243} & m_{24}= \coeffs{9824382756181109886988461}. \\
\end{array}$$
 Then we reduce with the LLL algorithm the following lattice defined by the $(6\times
6)$-matrix, where $m_{ij}$ stands for the coefficient at row $i$ and column $j$
of the above matrix $M$, and $m_{i,3+i}=1$:
$$N = \begin{bmatrix}
  p & 0 & 0 & 0 & 0 & 0 \\
  0 & p & 0 & 0 & 0 & 0 \\
  0 & 0 & p & 0 & 0 & 0 \\
  m_{00}  & m_{01} & m_{02} & 1 & 0 & 0 \\
  m_{10}  & m_{11} & m_{12}  &  m_{13} & 1 & 0 \\
  m_{20}  & m_{21} & m_{22}  & m_{23} &  m_{24} & 1 \\
\end{bmatrix}~.$$
Each row of LLL$(N)$ gives us a
preimage $P\in \ZZ[x]$ of short coefficients
such that $\log_2 \|P\|_\infty \approx \log_2 p/2 = 42.34$ bits
and $\log \rho(P) \equiv \log T \bmod \ell$ (in other words,
$(T/\rho(P))^{\frac{p^6-1}{\ell}} = 1$).
The fourth row has coefficients of at most $41.82$ bits and gives
$$\begin{array}{@{}r@{\hspace{3pt}}l@{}}
  P =& \coeffs{482165402365\ } x^5 + \coeffs{3892831179802\ } x^4 + \coeffs{2694050932529\ } x^3 \\
     & + \coeffs{2325450478817\ } x^2 + \coeffs{1117470283668\ } x + \coeffs{3688595236671\ }.
\end{array}$$
The pseudonorm of $P$ w.r.t.~$f_0$ is
$$ \begin{array}{@{}l@{}}
\Reslt(P, f_0) = \\
    \coeffs{32601551184187978602887820222780280368556791213406352787959859478882009
\backslash} \\
  \coeffs{89411710052105812763285379877699363515358275429392312189582741360186561}
\end{array} $$
of 471 bits, which is very close to $\log_2 Q^{11/12} = 466$ bits.
Its factorization in prime ideals of $K_{f_0}$ is
$$\begin{array}{l}
\langle \coeffs{3},x + \coeffs{2}\rangle^3
\langle \coeffs{11},x + \coeffs{5} \rangle
\langle \coeffs{17},x + \coeffs{4} \rangle
\langle \coeffs{67},x + \coeffs{44} \rangle
\langle \coeffs{2011},x + \coeffs{463} \rangle
\langle \coeffs{501997},x + \coeffs{18312} \rangle\\
\langle \coeffs{340575947},x + \coeffs{27999767} \rangle
\langle \coeffs{506032577},x + \coeffs{177467846} \rangle
\langle \coeffs{604579099},x + \coeffs{309800481} \rangle \\
\langle \coeffs{1402910243559283}, x + \coeffs{1034551157262971} \rangle\\
\langle \coeffs{1587503571970639},x + \coeffs{524543605465730} \rangle\\
\langle \coeffs{36834399852305717},x + \coeffs{24916507207930752} \rangle\\
\langle \coeffs{242270403627311729},x + \coeffs{170018299727614229} \rangle\\
\langle \coeffs{1070632553963863603},x + \coeffs{408232161861505290} \rangle\\
\langle \coeffs{4305864084909925127},x + \coeffs{3252872861595329896} \rangle.
\end{array}$$
A common choice for the factor basis would be to set its smoothness
bound to 30 or 32 bits. There are six prime ideals whose norm is larger than 30 bits, and that
should be retreated to reach the factor basis.
This initial splitting, testing all pseudonorms obtained for
$g^iT_0$, $i$ from 0 to 930000, that is, $5.58\cdot 10^6$ pseudonorms,
with our Magma implementation, took
0.95 day on one node of 16 physical cores (32 virtual cores thanks to hyperthreading)
Intel Xeon E5-2650 at 2.0GHz, that is, 15.2 core-days.
\end{example}

\begin{example}[A more general example with NFS]
Assume that $n$ is even and let $T \in \F_{p^n}$. Compute a polynomial basis
$(1, U, U^2, \ldots, U^{n/2-1})$ of the subfield
$\F_{p^{n/2}}$. 
Let 
$$ L = \begin{bmatrix}
  T \\
  UT \\
  \vdots \\
  U^{n/2-1} T \\
\end{bmatrix} \mbox{ and compute } 
 M =
\begin{bmatrix}
  m_{1,1}& \ldots & m_{1,\frac{n}{2}-1}     & 1     & 0      & \ldots & 0 \\
  \vdots &        &       &       & \ddots & \ddots & \vdots \\
  \vdots &        &       &       &        & \ddots & 0 \\
  m_{\frac{n}{2}} &        &       & \ldots &        &   m_{\frac{n}{2},\frac{n}{2}-1}    & 1 \\
\end{bmatrix}$$
to be the reduced echelon form of $L$.
Then we define the lower triangular matrix made of the $n/2 \times n/2$
identity matrix with $p$ on the diagonal in the upper left quarter, the
 $n/2 \times n/2$ zero matrix in the upper right quarter, and the
 $n/2 \times n$ matrix $M$ in reduced echelon form 
in the lower half. 
Moreover, if $\deg(f) > n$, then we add $(\deg f-n-1)$ rows made of the coefficients of
$x^i\psi$ where $\F_{p^n} = \F_p[x]/(\psi(x))$, for $0 \leq i < \deg f-n-1$.
Finally we apply the LLL algorithm to this matrix.
The short vector gives us a preimage $P$ whose pseudonorm is bounded by $Q^{1/2}$, with
a polynomial selection such that $\|f\|_\infty=O(1)$ (such as conjugation or GJL).
Applying Lemma~\ref{lemma: asympt cpx boot}, we set the bound $B_1$ to
be $B_1 = L_Q[2/3, ((1/2)^2/3)^{1/3} \approx 0.436]$. The running time of
Algorithm~\ref{alg:boot-d-subfield-ExTNFS}
will be $L_q[1/3, (3/2)^{1/3} \approx 1.144]$.
We obtain preimages $P$ whose pseudonorm is bounded by $Q^{1-\frac{1}{2n}}$ with
the \JLSVi{} polynomial selection method as shown in Example~\ref{ex:Fp6}.
Applying Lemma~\ref{lemma: asympt cpx boot}, we set the bound $B_1$ to
be $B_1 = L_Q[2/3, ((1-\frac{1}{2n})^2/3)^{1/3}]$. The running time of
Algorithm~\ref{alg:boot-d-subfield-ExTNFS}
will be $L_Q[1/3, (3(1-\frac{1}{2n}))^{1/3}]$.
\end{example}

\section{Optimal representation: monic polynomial of degree $\varphi(n)$}
\label{sec:optimal-repr}

In Section~\ref{sec:general-strategy}, we exploited the largest proper
subfield $\F_{p^d}$ of $\F_{p^n}$ to find an alternative
representation of a given element $T\in \F_{p^n}$, with $n-d$ nonzero
coefficients, and $d-1$ coefficients (in $\F_p$) set to zero. The key ingredient
was to compute an expression of the form
$P = uT$, where $P$ has $d-1$ coefficients set to zero, and $u \in
\F_{p^d}$, so that we have the equality $(P/T)^{(p^n-1)/\Phi_n(p)} =
1$.
We can generalize this strategy: given an element $T$ in
the cyclotomic subgroup of $\F_{p^n}$, of order $\Phi_n(p)$, we would like to
compute an element $P \in \F_{p^n}$ such that $(P/T)^{(p^n-1)/\Phi_n(p)} =
1$ and $P$ has only $\varphi_n(p) = \deg \Phi_n(x)$ non-zero
coefficients in $\F_p$. To achieve that, we would like to compute an
expression
$$ T = u_1 u_2 \ldots u_i P,~\mbox{ where each } u_i \mbox{ is in a
  proper subfield } \F_{p^{d_i}}
\mbox{  of }\F_{p^n}~.$$
Given an element $T \in \F_{p^n}$ such that
$T^{(p^n-1)/\Phi_n(p)}\neq 1$ (in other words, its order in the cyclotomic subgroup of
$\F_{p^n}$ is not zero), we can sometimes compute an element $P$ with $\varphi(n)$ non-zero
coefficients, where $\varphi(n)$ is the Euler totient function,
plus a monic leading term. Since in
Algorithm~\ref{alg:boot-d-subfield-ExTNFS} we do not need a one-to-one 
correspondence between the given elements of the cyclotomic subgroup
on one hand, and their representation with only $\varphi(n)$ non-zero non-one
coefficients on the other hand, we can just solve a system of equations even if we do not
expect a solution at all times. If no such compact representation is
found, one picks a new $t$ and tests for the next $g^t T_0$. 
To define the system we need to solve, we list all the distinct subfields $\F_{p^d}$ of
$\F_{p^n}$ that are not themselves contained in another proper subfield,
compute a polynomial basis for each of them, and allow a degree
of freedom for the coefficients to be $\varphi(d)$ for each subfield $\F_{p^d}$.
If we consider the system as a Gr{\"o}bner basis computation, it becomes very costly even for
$\F_{p^{30}}$, where we need to handle $n-\varphi(n)-1=21$ variables.
We give a numerical example for $\F_{p^6}$.

What we do is different than what is done in XTR and CEILIDH
compact representations. In the XTR cryptosystem \cite{C:LenVer00},
the elements of the cyclotomic subgroup of
  $\F_{p^6}$ are represented with an optimal normal basis over $\F_{p^2}$, also
  in normal basis representation. Only their trace over $\F_{p^2}$ is considered
  for representation, storage, and transmission. 
In \cite{C:VanWoo04,EC:DGPRSSW05}, the aim is to define
a one-to-one correspondence between the elements in the torus of $\F_{p^n}$ and the
set of coefficients $(\F_p)^{\varphi(n)}$. This optimal compression
was achieved for $n=6$ but not for $n=30$.
  These techniques are not compatible with the representation of the elements in
  the NFS algorithm: one chooses a representation by choosing two polynomials
  $f_0, f_1$ that define the two number fields involved in the algorithm. One
  cannot change the representation afterwards: the elements in the individual
  discrete logarithm phase should be represented in the same way as the elements
  of the factor basis.

\subsection{Compressed representation of elements in the cyclotomic subgroup of
  $\F_{p^6}$ by a monic polynomial of degree 2}
\label{subsec:Groebner}

We consider the finite field $\F_{p^6}$. We will use the two subfields $\F_{p^2}$
 and $\F_{p^3}$ to cancel three coefficients.
Let $U \in \F_{p^6}$ such that $(1, U, U^2)$ is a basis
of $\F_{p^3} \subset \F_{p^6}$. Let $V \in \F_{p^6}$ such that $(1, V)$
is a basis of $\F_{p^2} \subset \F_{p^6}$.
We want to solve 
$$uvwT = (u_0 + u_1 U + u_2 U^2) (v_0 + v_1 V) w T = P~,$$ 
where $u = u_0 + u_1 U + u_2 U^2 \in \F_{p^3}$, $v = v_0 + v_1 V \in \F_{p^2}$, $w \in \Fp$, and $P \in
\F_{p^6}$ is represented by a monic polynomial in $x$ of degree 2.
To simplify, we set $u_2 = v_1 = 1$ so that we obtain equations where
we can recursively eliminate the variables by computing resultants.
We compute $u, v, w$ such that $uvwT = P$, where $P=a_0+a_1x+x^2$
is monic of degree 2.
We define the lattice
$$ L = \begin{bmatrix}
  p & 0 & 0 \\
  0 & p & 0 \\
  a_0 & a_1 & 1 \\ 
\end{bmatrix}~.$$
The determinant of $L$ is $p^2$ hence LLL$(L)$ computes a short vector
$P$ 
of coefficient size bounded by $C p^{2/3}$, where $C$ is the
LLL approximation factor (we can take $C\approx 1$ in this practical
case).
The pseudonorm of $P$ will be in the \JLSVi{} case $|\Reslt(P,f)| \approx
\|P\|_\infty^6 \|f\|_\infty^2 = p^5 = Q^{5/6}$. This is better than
the bound $Q^{11/12}$ obtained with the cubic subfield cofactor
method.
This specific method can be generalized to specific cases of finite
fields where reducing as much as possible the degree of the target is
the best strategy, as in Example~\ref{ex:Fp6-Groebner}.
This technique was implemented in \cite{SAC:GGMT17} for computing a
new discrete logarithm record in $\F_{p^6}$ of 422 bits.

\begin{example}\label{ex:Fp6-Groebner}
  We take the same finite field parameters as in
  Example~\ref{ex:Fp6}, where $\F_{p^6} = \F_p[x]/(f(x))$. 
  $g=x+3$ is a generator of $\F_{p^6}$. $(1,U,U^2)$ where $U = g^{1+p^3}$ is a basis of
  $\F_{p^3}$ and $(1,V)$ where $V= g^{1+p^2+p^4}$ is a basis of $\F_{p^2}$.
We solve the system $(u_0 + u_1 U + U^2)(v_0 + V) T = P$ where $u_i,
v_i \in \Fp$ and $P$ is monic and represented by a polynomial of degree 2
instead of 5. We ran Algorithm~\ref{alg:boot-d-subfield-ExTNFS} with
this modification on the same machine
(Intel Xeon E5-2650 at 2.0GHz with
hyperthreading turned on), from $g^0T_0$ to $g^{90000}T_0$.
On average, the set of $I$ candidates $g^iT_0$ led to six times more monic degree two
polynomials $P_i$. We found that the third polynomial
output for $T=g^{60928}T_0$ has a 64-bit-smooth pseudonorm. Testing the
90000 $g^iT_0$ (that is, $2.7\cdot 10^{5}$ pseudonorms) took 1.2
core-day:
$$
\begin{array}{@{}r@{\hspace{3pt}}c@{\hspace{3pt}}l@{}}
u &=& \coeffs{12307232765040677532260293} + \coeffs{18116887363761988927417497\ } U + U^2 \\
v &=& \coeffs{30422514788629575495025401} + V \\
w &=& \coeffs{21470888563719305004900851} \\
P &=& u v w T \\
  &=& x^2 + \coeffs{479190487430850236087613\ } x +
      \coeffs{6943966382910680737931850\ }. \\
\end{array}$$
We checked that $(P/T)^{\frac{p^6-1}{\ell}} = 1$, meaning that
$\log_g P = \log_g T = 60928 + \log_g T_0$.
Then we reduce the lattice defined by the matrix
$$\begin{bmatrix}
  p & 0 & 0 \\
  0 & p & 0 \\
  \coeffs{6943966382910680737931850} & \coeffs{479190487430850236087613} & 1 \\
\end{bmatrix} $$
to get three polynomials of smaller coefficients, the third one being
$$R = \coeffs{107301402613441938\ } x^2 - \coeffs{32014642452727111\ } x
+ \coeffs{60125316588415598}$$
whose pseudonorm is
$$\begin{array}{@{}l@{}}
\Reslt(R,f) = \\
    \coeffs{12474200655939339762647720853686893930822373172685245800138935320} \\
 \coeffs{22514918959041066623605301421497621878867497302294873400285994921}
\end{array} $$
of 429 bits, which corresponds to the estimate $\log_2 Q^{5/6} = 423$ bits.
We still have $\log_g \rho(P) \equiv \log_g T_0 + 60928 \bmod \ell$.
The pseudonorm is 64-bit-smooth, and its factorization into prime ideals is
$$\begin{array}{@{}l@{}}
 \langle \coeffs{11},x + \coeffs{8} \rangle
 \langle \coeffs{23},x + \coeffs{15} \rangle
 \langle \coeffs{12239},x + \coeffs{482} \rangle \mbox{ (small) }\\
 \langle \coeffs{1144616018827},x + \coeffs{218590032699} \rangle \\
 \langle \coeffs{2682498999539},x + \coeffs{1582479651452} \rangle\\
 \langle \coeffs{42175797334421},x + \coeffs{14828919302862} \rangle \\
 \langle \coeffs{1195156519724071},x + \coeffs{966160984838340} \rangle\\
 \langle \coeffs{13533793331200309},x + \coeffs{12224259030902272} \rangle\\
 \langle \coeffs{92644276473186311},x + \coeffs{5754482791048201} \rangle\\
 \langle \coeffs{101186915694167857},x + \coeffs{42826432866764905} \rangle\\
 \langle \coeffs{20516170632026633467},x + \coeffs{14633926248916275064} \rangle~.
\end{array}$$
The first three ideals are small enough to be in the factor basis, and
eight ideals on side 0 remain to be descended.
\end{example}

\section*{Conclusion}
The algorithms presented in this paper were implemented in Magma and used for
cryptographic-size record computations. It was shown in \cite{EPRINT:ACCMOR16}
that combined with a practical variant of Joux's  algorithm,
our Algorithm~\ref{alg:boot in small char} allows to compute a discrete logarithm in
the finite field $\F_{3^{6\cdot 709}}$ at the same cost as in
$\F_{3^{6\cdot 509}}$ with the previous state of the art.
The large characteristic variant (Algorithm~\ref{alg:boot-d-subfield-ExTNFS}) was used in \cite {SAC:GGMT17} for a 422-bit
record computation in $\F_{p^6}$. It would be interesting to
be able to generalize it further, to be able to exploit at the same
time several subfields, and provide a practical implementation of it for
cryptographic sizes.

\section*{Acknowledgments}
The author is grateful to Francisco Rodr{\'i}guez-Henr{\'i}quez,
Frederik Vercauteren, Robert Granger and Thorsten Kleinjung,
François Morain, Pierrick Gaudry, Laurent Gr\'{e}my, Luca De Feo,
and the other researchers who helped to improve this work.
All these very fruitful discussions started at the ECC 2015 conference, the
CATREL workshop and the Asiacrypt 2015 conference; 
in particular, the author would like to thank the anonymous reviewers of
Asiacrypt 2015 who suggested a generalization.

\bibliographystyle{amsplain}
\bibliography{abbrev3-short,biblio}
\end{document}